\theoremstyle{plain}
\newtheorem{thm}{Theorem}[section]
\newtheorem{lem}{Lemma}[section]
\newtheorem{defin}{Definition}[section]
\renewcommand{\[}{\begin{eqnarray*}}
\renewcommand{\]}{\end{eqnarray*}}
\newcommand{\la}{\begin{eqnarray}}
\newcommand{\al}{\end{eqnarray}}
\renewcommand{\epsilon}{\varepsilon}
\renewcommand{\phi}{\varphi}
\newcommand{\N}{{\mathbb N}}
\newcommand{\Q}{{\mathbb Q}}\newcommand{\R}{{\mathbb R}}
\newcommand{\cM}{{\mathcal M}}
\newcommand{\cP}{{\mathcal P}}\newcommand{\cQ}{{\mathcal Q}}
\newcommand{\cX}{{\mathcal X}}\newcommand{\cY}{{\mathcal Y}}
\newcommand{\Prob}{\mathrm{Prob}}\newcommand{\prob}{\mathrm{prob}}
\newcommand{\markov}{\mathrm{mark}}
\newcommand{\Eta}{\mathrm{H}}
\renewcommand{\theta}{\vartheta}
\newcommand{\defn}[1]{{\em#1}}
\renewcommand{\Pr}{\mathrm{Pr}}%
\newcommand{\Se}{\mathrm{Se}}  %
\newcommand{\Sp}{\mathrm{Sp}}  %
\begin{document}

\begin{frontmatter}


\title{Confidence bounds for the sensitivity lack of a less specific
diagnostic test, without gold standard}


\runtitle{Confidence bounds  for a  sensitivity lack, \today}

\begin{aug}
\author{\fnms{Lutz} \snm{Mattner}
\ead[label=e1]{mattner@uni-trier.de}} 
 \and
 \author{\fnms{Frauke} \snm{Mattner}
 \ead[label=e2]{mattnerf@kliniken-koeln.de}}%

\runauthor{L.~Mattner and F.~Mattner, \today}

\affiliation{Universit\"at Trier and Kliniken der Stadt K\"oln} 
 {\rm \today}\\
 {\footnotesize\tt \jobname.tex}

\address{Universit\"at Trier\\
Fachbereich IV - Mathematik\\ 
54286 Trier \\
Germany\\
\printead{e1}}
\address{
Universit\"atsklinikum der privaten\\ Universit\"at Witten-Herdecke,\\
Campus K\"oln-Merheim, \\ 
Institut f\"ur Hygiene \\ 
Ostmerheimer Stra\ss e 200  \\ 
51109 K\"oln\\
Germany \\
\printead{e2}}
\end{aug}

\begin{abstract}
We consider the problem of comparing two diagnostic tests
based on  a sample of paired test results without true state determinations, 
in cases where the second test can reasonably be assumed to be at least as
specific as the first. For such cases, we provide two informative confidence
bounds: A lower one for the  prevalence times the sensitivity
gain of the second test with respect to the first,
and an  upper one for the sensitivity of the first test. 
Neither  conditional independence of the two tests nor 
perfectness of any of them needs to be assumed.

An application of the proposed confidence bounds to a sample of 256
pairs of laboratory test results for toxigenic Clostridium difficile
provides evidence for a dramatic sensitivity gain through 
first appropriately culturing Clostridium difficile from stool
samples before applying an enzyme-immuno-assay.  
\end{abstract}

\begin{keyword}[class=AMS]
\kwd[Primary ]{62P10}.
\end{keyword}
\begin{keyword}
\kwd{CASS data}
\kwd{Clostridium difficile}
\kwd{conditional independence assumption}
\kwd{EIA}
\kwd{imperfect reference test}
\kwd{latent class model}
\kwd{nonidentifiability}
\kwd{robust statistics of diagnostic tests}.
\end{keyword}

\end{frontmatter}


\centerline{\em Dedicated to Abram M.~Kagan on the occasion of 
AMISTAT 2011 at Prague}

\section{Main results and applications} 
\subsection{Introduction and outline} \label{Subsec:Intro.outline}
Inference for sensitivities or specificities of diagnostic tests 
can be next to impossible if no suitable method for determining 
true states is available. Motivated by a real data problem described below, 
and in more detail 
in~\cite{Mattner.Winterfeld.Mattner.2009,Mattner.Winterfeld.Mattner.2010}, 
we consider here the situation where paired observations for two tests 
are given and where it can be assumed that the first test is less specific 
than the second. Can we then infer  from suitable observations 
that the second test is more sensitive, and hence better, than the 
first? And if yes, by how much? 

Theorem~\ref{Thm.main.result} in Subsection~\ref{Subsec.Main} 
below provides a simple and in some sense 
optimal answer. The necessary notation and concepts are carefully
explained before in Subsections \ref{Subsec.Math}-\ref{Subsec.Latent},
but some readers may wish  to start less formally
by  first consulting Subsection~\ref{Subsec.Example},
which introduces our motivating example, and then
proceed to the application of  Theorem~\ref{Thm.main.result}
given immediately after its statement. 
There it turns out that the answer to the above ``how much?'' question 
depends on upper bounds assumed  for the prevalence, but that 
nevertheless interesting upper bounds for the sensitivity of the 
first test can be given without such an assumption, using our 
Theorems~\ref{Thm.bound.Se1.new} and~\ref{Thm.bound.Se1}. 
Neither of our results uses any further assumptions, such as the 
conditional independence assumption as discussed
and criticized, for example,  in~\cite[Section~7.3]{Pepe.2003}.

We prove Theorems~\ref{Thm.main.result}-\ref{Thm.bound.Se1}
in the final Section~\ref{Sec.remaining.proofs},
after collecting auxiliary results on latent class models 
in Section~\ref{Sec.Aux.latent.class.models}
and proving them in Section~\ref{Sec.Proofs.Aux.latent.class.models}.

While there is a substantial literature on various aspects of the statistics of
diagnostic tests, see in particular the monographs
\cite{Abel.1993,Pepe.2003,Zhou.McClish.Obu.2002},
we are not aware of  a previous treatment of the problem 
considered here. Our  assumption that the first test is less specific than
the second may seem very  special, so let us point out 
that, for the purpose of obtaining upper bounds on the sensitivity of the first
test, our assumption  may  by Theorem~\ref{Thm.bound.Se1} 
replace the always less plausible assumption  of perfectness of the second 
test, see Subsection~\ref{Subsec:Example.CASS} for an example.

\subsection{Mathematical and probabilistic notation and conventions}\label{Subsec.Math}
We use ``iff'' as an abbreviation for ``if and only if''.
We write $\N:=\{1,2,3,\ldots\}$, $\N_0 :=\{0\}\cup\N$, 
and $\overline{\R}:=\R\cup\{-\infty,\infty\}$.
We put $x/0:=\infty$ for $x>0$, but we define $0/0$ below at each occurence
separately to be either $0$ or  $1/2$ or $1$. 
A subscript ``$+$'' indicates summation with respect to the variable
it replaces, as in $x_+ = \sum_{i=1}^n x_i$ for $x\in\R^n$
or in~\eqref{the.data} below  for $k\in\N_0^{\{0,1\}^2}$.
By contrast, a superscript
``$+$'' indicates the positive part, so  $x^+ = x\vee 0=\max\{x,0\}$ 
and correspondingly $x^- = (-x)\vee 0$
for $x\in\overline{\R}$.
As usual, the order theoretic operations $\wedge$ and $\vee$ are
computed first in expressions like $a\,b\wedge c := a\,(b\wedge c)= a\min\{b,c\}$.
 
If $\cX$ and $\cY$ are any sets, then 
\[
  \prob(\cX) &:=& \big\{ (\cX\ni x\mapsto p_x \in[0,1])\,:\,\sum_{x\in\cX}p_x=1\big\} \\
  \markov(\cX,\cY) &:=& 
\big\{ (\cX\times\cY \ni (x,y)\mapsto p_{y|x}) \,:\,p_{\cdot |x} \in\prob(\cY)
 \text{ for }x\in\cX\big\} 
\]
denote the set of all  discrete probability densities on $\cX$ and
the set of all discrete Markov transition densities from $\cX$ to $\cY$,
where  the standard dot notation 
$p_{\cdot |x}$ for
the partial function $y\mapsto p_{y|x}$ has been used. With 
$\mathrm{M}_{n,p}$ we denote the multinomial distribution with sample size
parameter $n$ and success probability vector $p\in\prob(\cX)$ for some $\cX$,
that is, $\mathrm{M}_{n,p}(\{k\}) = n!\prod_{x\in\cX}(p_x^{k_x}/k_x!)$
for $k\in\N_0^\cX$ with $\sum_{x\in\cX}k_x=n$.

\subsection{Confidence bounds and their comparison}\label{Subsec.Conf}
Let $\cP = (P_\theta : \theta \in\Theta )$ be 
a statistical model  on a sample space $\cX$ and let 
$\kappa : \Theta \rightarrow \overline{\R}$ be a parameter of interest.
We allow nonidentifiability of $\kappa$, that is, 
we may have $\theta_1,\theta_2 \in \Theta$ with $P_{\theta_1}= P_{\theta_2}$
but $\kappa(\theta_1)\neq \kappa(\theta_2)$.
For lack of any better name, let us call the pair
$(\cP,\kappa)$  an \defn{estimation problem}.
Let $\beta\in[0,1]$.  Then every measurable function 
$\underline{\kappa} : \cX \rightarrow\overline{\R}$ 
with $P_\theta( \underline{\kappa} \le \kappa(\theta))\ge \beta$
for every $\theta \in\Theta$ is called a \defn{lower $\beta$-confidence bound} for 
$(\cP,\kappa)$.

Now let $\underline{\kappa}$ and 
$\underset{\widetilde{}}{\kappa}$ be both lower $\beta$-confidence bounds
for $(\cP,\kappa)$.  Then everybody seems to agree that for preferring 
$\underline{\kappa}$ over $\underset{\widetilde{}}{\kappa}$, it would be 
desirable to have 
\la     \label{comp.conf.b.usual}
  P_\theta(  \underset{\widetilde{}}{\kappa}  \ge t) &\le&  
P_\theta( \underline{\kappa} \ge t)   
\qquad \text{ for }\theta\in\Theta \text{ and } t < \kappa(\theta)
\al
For example, Lehmann and Romano~\cite[page 72]{Lehmann.Romano.2005} 
would call $\underline{\kappa}$
\defn{uniformly most accurate} if  \eqref{comp.conf.b.usual} held 
for every $\underset{\widetilde{}}{\kappa}$ as above, 
but such a $\underline{\kappa}$ is known to exist
in exceptional cases only. 
The desideratum~\eqref{comp.conf.b.usual} could be supplemented
by conditions for $t\ge \kappa(\theta)$ in different 
ways, see~\cite[page 162]{Pfanzagl.1994} for one possibility, 
but we stick to~\eqref{comp.conf.b.usual} as it is.
Thus  we call  $\underset{\widetilde{}}{\kappa}$ \defn{worse}
than $\underline{\kappa}$, 
and equivalently $\underline{\kappa}$ \defn{better} 
than $\underset{\widetilde{}}{\kappa}$,
if \eqref{comp.conf.b.usual} holds,
and \defn{strictly} so,  
if in addition strict inequality holds 
in \eqref{comp.conf.b.usual} for at least one $\theta$ and one $t$. 
Accordingly, $\underline{\kappa}$ is called \defn{admissible}
as a $\beta$-confidence bound for $(\cP,\kappa)$,
if no other such bound $\underset{\widetilde{}}{\kappa}$ is strictly better. 
Finally,  $\underline{\kappa}$ and  $\underset{\widetilde{}}{\kappa}$
are called \defn{equivalent}, if each is worse than the other, that is,
if~\eqref{comp.conf.b.usual} holds with ``$=$'' in place of ``$\le$''.  

\subsection{Latent class models for diagnostic tests}\label{Subsec.Latent}
Informally speaking, a (dichotomous) diagnostic test is a procedure yielding a guess
$\in$ $\{0,1\}$  for the state $\in$ $\{0,1\}$ of  any item 
belonging to some specified population. In this context,
$0$ is called {\em negative} and $1$ is called {\em positive}. 
In medicine, the population often consists of persons, 
for whom a positive state means actually having a certain disease,
and a positive diagnosis means to be guessed to have the disease.   
The accuracy of a diagnostic test is modelled by two numbers
called {\em specificity} and {\em sensitivity}, 
with specificity interpreted as  
the probability that a random negative item is diagnosed as negative,
and sensitivity as  
the probability that a random positive item is diagnosed as positive.
The probability of  diagnosing a random item from the whole population 
as positive, say, then of course depends also on the {\em prevalence},
which is the probability of such an item to be actually positive.
If we formalize the above, for samples of size~$n$ rather than~$1$,
and also admitting more generally $d$~tests, rather than just one, 
to be applied to every item, we arrive at  the following 
model considered in essence already in \cite{Gart.Buck.1966}.

\label{sec.latent.class.model}
Let  $d\in\N$ and
\[
 \Theta_d &:=& \prob(\{0,1\})\times \markov(\{0,1\}, \{0,1\}^d)
\]
For $\theta = (\pi,\chi) \in \Theta_d$, let 
$\mu(\theta)\in \prob(\{0,1\}^d )$ denote the 
second marginal of  the density 
\la         \label{eq.def.joint.dens}
 \{0,1\}\times \{0,1\}^d \ni (i,j) \mapsto \pi^{}_i\chi^{}_{j|i}
\al
so that 
\la    \label{eq.def.lambda}\label{eq.def.mu}
 \big(\mu(\theta)\big)_{j}
 &=&  \sum_{i=0}^1
 \pi^{}_i\chi^{}_{j|i}
 \quad\text{ for }j\in\{0,1\}^d \text{ and } \theta\in\Theta_d
\al
Finally, with a given $n\in\N$ often
notationally surpressed in what follows, let   
\la          \label{Eq:Def.P.theta}
  P_\theta &:=& \mathrm{M}_{n,\mu(\theta)} \qquad\text{ for }\theta\in\Theta_d
\al
\begin{defin}
Let $d,n\in \N$. The \textit{\em (full) latent class model} for a sample of 
size $n$ of combined results of $d$ diagnostic tests with unknown 
characteristics and for a state with unknown prevalence is
$ \cP_d:= ( P_\theta  : \theta \in\Theta_d)$.
\end{defin}
The interpretation of the parameter $\theta =(\pi,\chi)$ in 
this model is as follows: $\pi_1$ is the prevalence of positive states
and $\chi$ is the joint characteristics of the $d$~diagnostic tests.

For example, let $d=2$. Then $\chi^{}_{01|0}$ is the probability 
that a random negative (see the last bit of the subscript) 
is diagnosed negative by the first test (see the first bit of the subscript)
and positive by the second (see the second bit of the subscript).
And $\chi^{}_{0+|0}=\chi^{}_{00|0} +\chi^{}_{01|0}$
is then accordingly the probability that a random negative
is diagnosed negative by the first test, that is, the 
specificity of  the first test.
More systematically, and introducing a notation used below,
we put
\la          \label{Eq.Def.partial.chi}
\chi^{(1)}_{\iota|i} := \chi^{}_{\iota+|i}
&\text{ and }&\chi^{(2)}_{\iota|i} := \chi^{}_{+\iota|i}
\qquad\text{ for }i,\iota\in\{0,1\}
\al
and regard $\chi^{(1)}, \chi^{(2)} \in \markov( \{0,1\},\{0,1\})$
as the characteristics of the first and of the second test, respectively.

Coming back to general $d$, formula~\eqref{eq.def.joint.dens}
gives the joint density of  a true state determination together with
the results of the $d$ tests, for an item picked  at random from the whole 
population, and $\mu(\theta)$ is the marginal density 
corresponding to unobservability of the true state. 
Finally, the multinomial distribution
$P_\theta=\mathrm{M}_{n,\mu(\theta)}$ models 
testing thus a random sample of size~$n$ from the (conceptually infinite) population,
and  counting  just the number of occurences of each possible combination of the $d$ test results.

In this paper, motivated by the application sketched in
Subsection~\ref{Subsec.Example} below,
we are mainly interested  in  the case of $d=2$, and 
here  in particular in the submodel assuming that the specificity 
of the first test is at most equal to the specificity of the second.
In terms of the parameter $\theta=(\pi,\chi)\in\Theta_2$ 
and with the notation introduced in~\eqref{Eq.Def.partial.chi} above, 
this assumption is expressed as  $\chi^{(1)}_{0|0}\le \chi^{(2)}_{0|0}$.

\begin{defin}  \label{Def:restr.lat.cl.mod}
Let $n\in \N$. In this paper,
the \textit{\em restricted latent class model} for a sample of 
size $n$ of combined results of two diagnostic tests with unknown 
characteristics and for a state with unknown prevalence is
$\cP_{2,\le}:= ( P_\theta  : \theta \in\Theta_{2,\le})$
with $\Theta_{2,\le} := \{(\pi,\chi) \in\Theta_2 :  \chi^{(1)}_{0|0}\le \chi^{(2)}_{0|0}   \}$.
\end{defin}

\subsection{Example: A comparison of two tests for diagnosing toxigenic
Clostridium difficile}   \label{Subsec.Example}
{\em Clostridium difficile} is a certain species of bacteria. Some of these, 
called {\em toxigenic}, have the potential to produce one
or both of certain toxins, called {\em A} and {\em B}.
Toxigenic Clostridium difficile is responsible for one of the most 
prevalent infections of the human gut. 
It may lead to severe courses of infection and is easily transmitted in 
hospitals. A fast  and accurate diagnosis would be highly desirable
for initiating adequate therapy and preventing transmissions to other 
patients. Unfortunately, so far no diagnostic test, not even a  complex and 
time-consuming  one, has been proven to be highly accurate, that is, 
with specificity and sensitivity close to 1.

Available diagnostic tests are applied to stool specimens of patients
with diarrhoea, using one of the following three methods, with details
to be specified.  The first, simple and a matter of a few hours, 
consists in performing an enzyme-immuno-assay~({\em EIA}) 
for the direct detection of toxin A or B in the stool specimen. 
The second, taking about 3~days,  consists in trying to culture 
Clostridium difficile (possibly nontoxigenic) from the stool specimen
on an appropriate medium 
and applying then a ``confirmatory test'' for toxin A or B,
for example an EIA as above, to any cultured colonies.
The third, again taking about 3 days,
tests the cytotoxicital potential of the stool specimen
by applying it to a  vero-cell culture 
({\em cytotoxicity neutralisation test}). For several such tests, different
accuracy values were published during the last years,
often obtained by assuming the cytotoxicity neutralisation test
to be a sufficiently accurate reference test or ``gold standard'',
see~\cite{Mattner.Winterfeld.Mattner.2009,Mattner.Winterfeld.Mattner.2010}
for appropriate references.

One goal 
of~\cite{Mattner.Winterfeld.Mattner.2009,Mattner.Winterfeld.Mattner.2010}
was to compare 
a test according to the first method
described above ({\em Test 1} or {\em direct test})
with a test according to the second method, with the confirmatory test
being the same EIA as in the direct test ({\em Test 2} or {\em culture test}).
Both tests were applied to each stool specimen of a sample of size~$256$,
consisting of all liquid specimens sent to a microbiological laboratory
during two consecutive months. 
The observed data were
\la \label{the.data}
 \begin{array}{rr|r}
  k^{}_{00}=          210 &\quad k^{}_{01}=\phantom{2}20& k^{}_{0+}=230 \\
  k^{}_{10}=\phantom{21}4 &\quad k^{}_{11}=\phantom{2}22& k^{}_{1+}=\phantom{2}26 \\ \hline
  k^{}_{+0}= 214          &\quad k^{}_{+1}=\phantom{2}42& k^{}_{++}=256
 \end{array}
\al
where, for example,  $k^{}_{01}$ is the number of specimens tested negative with Test~1
and positive with Test~2. 
True states were unobservable. The prevalence of toxigenic  Clostridium
difficile, in the population of all liquid stool samples sent to 
a laboratory for microbiological investigation,
is certainly not known precisely, but is believed to be very roughly $15\%$.  
So far it seems  natural to use the full latent class  
model $\cP_2$ for analyzing the data.
However, as  the EIA is applied in Test~1 to the whole stool specimen 
and in Test~2 only to a part of a culture from the specimen 
already identified as Clostridium difficile, it seems very plausible
to assume that Test~2 is at least as specific as Test~1.
This  suggests 
that the  restricted latent class model $\cP_{2,\le}$ could be used, 
and that then
the superiority of Test~2 would follow if the latter can be proved 
to be also more  sensitive than Test~1.
Theorem~\ref{Thm.main.result} in the next section 
is formulated with a view towards situations like the present,
taking into account both models,  $\cP_2$ and  $\cP_{2,\le}$.

\subsection{Main results. Application to  the comparison of tests
for diagnosing toxigenic Clostridium difficile} \label{Subsec.Main}
\begin{thm}\label{Thm.main.result} 
Let $\beta\in[0,1]$, $n\in \N$, 
\la      \label{Deltaunderline}
  \underline{\Delta}: \big\{k\in\N_0^{\{0,1\}^2} : k_{++} = n\big\} \rightarrow [-1,1]
\al
be a function,  
and   $ \cM := (\mathrm{M}_{n,q} : q \in \prob(\{0,1\}^2))$
be a quadrinomial model.

{\sc A.}
The following three assertions are equivalent:

(i) $ \underline{\Delta}$ is a lower $\beta$-confidence bound in the model 
$\cM$ and for the parameter 
\la          \label{Param.interest.multinomial}
 q&\mapsto& q^{}_{01}-q^{}_{10}
\al 

(ii) $\underline{\Delta}$ is a lower $\beta$-confidence bound 
in the full latent class model $\cP_2$ and for  the parameter 
\la                                  \label{Param.interest.full}
  (\pi,\chi) &\mapsto&  
  \pi_1\, \big(\,\chi^{(2)}_{1|1}-\chi^{(1)}_{1|1}\,\big)
  -  (1-\pi_1)\, \big(\,\chi^{(2)}_{0|0}- \chi^{(1)}_{0|0}\,\big)
\al
   
(iii) $\underline{\Delta}$ is a lower $\beta$-confidence bound
in the restricted latent class model $\cP_{2,\le}$ and  for  the parameter 
\la                                   \label{Param.interest.restr}
  (\pi,\chi) &\mapsto& \pi_1\, \big(\,\chi^{(2)}_{1|1}- \chi^{(1)}_{1|1}\,\big)
\al
 
{\sc B.} 
Let $\underline{\Delta}$  obey 
the above conditions (i)-(iii) and let
$\underset{\widetilde{}}{\Delta}$ be another such function.
Then $\underset{\widetilde{}}{\Delta}$ is worse than $\underline{\Delta}$
as a lower $\beta$-confidence bound for 
$(\cM, \eqref{Param.interest.multinomial})$
iff it is so for $(\cP_{2}, \eqref{Param.interest.full})$,
and if it is so for  $(\cP_{2,\le}, \eqref{Param.interest.restr})$.
(Once ``iff'', once ``if''.)

{\sc C.} If $\underline{\Delta}$ is admissible 
as a $\beta$-confidence bound for one of the 
problems $(\cM, \eqref{Param.interest.multinomial})$ and 
$(\cP_{2}, \eqref{Param.interest.full})$,
then so it is for the other 
and for $(\cP_{2,\le}, \eqref{Param.interest.restr})$.
\end{thm}

See Section~\ref{Sec.remaining.proofs} for a proof of this and the other two 
theorems of this subsection.
We proceed to illustrate Theorem~\ref{Thm.main.result} 
by its application to the example from Subsection~\ref{Subsec.Example}.
Let $\beta\in[0,1]$ and $n\in \N$ be fixed. 
Wanted is a ``good''
confidence bound $\underline{\Delta}$ as in~\eqref{Deltaunderline}
and (iii) above.
Parts B and C Theorem~\ref{Thm.main.result}  suggest choosing $\underline{\Delta}$
to be a ``good'' confidence bound as in~(i).
We put
\[
  \underline{\Delta}(k) &:=& \ell(k_{01},k_{10},k_{00}+k_{11} ) \qquad\text{
    for }k\in\N_0^{\{0,1\}^2} \text{ with } k_{++} = n 
\]
where 
$\ell :  \{k\in\N_0^3\,:\, k_+ = n \} \rightarrow [-1,1]$
is the Lloyd-Moldovan lower $\beta$-confidence bound 
for the coordinate difference $\prob(\{1,2,3\}) \ni p\mapsto p_1-p_2$
in the trinomial model $\big(\mathrm{M}_{n,p} : p \in \prob(\{1,2,3\})\big)$,
see Subsection~\ref{Subs.Lloyd.Moldovan}. Then $\underline{\Delta}$
satisfies~\eqref{Deltaunderline} and~(i). 
With  $\beta = 0.95$ and  the data~$k$
from~\eqref{the.data}, we get  
\[
 \underline{\Delta}(k)&=& \ell(20,4,232) \,\,\, = \,\,\, 
 \mathtt{0.0320}
\]
as our lower confidence bound in (i),  corresponding to the point 
estimate $\frac{20}{256}-\frac{4}{256}=\frac{1}{16}= 0.0625$.
(Here and below, numbers in typescript like  $\mathtt{0.0320}$
are rounded consistently with the inequalities claimed.)
Thus,   assuming the restricted latent class model $\cP_{2,\le}$ and
using~(iii), we get the confidence statement 
\la      \label{Diff.significant}
  \chi^{(2)}_{1|1}- \chi^{(1)}_{1|1} &\ge& \frac{\mathtt{0.0320}}{\pi_1} 
\al
with $\pi_1>0$, so that Test~2 is significantly more sensitive
than Test~1 and hence, being at least as specific by assumption, 
significantly better. Without any upper bound on the 
prevalence $\pi_1$, the best lower bound for  the sensitivity gain  $\chi^{(2)}_{1|1}-
\chi^{(1)}_{1|1}$ of the culture test with respect to the direct test
we can obtain from~\eqref{Diff.significant}
is $\mathtt{0.0320}$.
But assuming   some plausible upper bound implies 
a dramatic sensitivity gain; for example, $\pi_1\le 0.15$
yields  $\chi^{(2)}_{1|1}- \chi^{(1)}_{1|1} \ge \mathtt{0.0320/0.15} 
= \mathtt{0.21}$.
This would imply in particular
$ \chi^{(1)}_{1|1} \le 1- (\chi^{(2)}_{1|1}- \chi^{(1)}_{1|1}) 
\le 1- \mathtt{0.21} = \mathtt{0.79}$
and hence a very poor sensitivity of the direct test.
It is remarkable that the latter conclusion, with a slightly larger 
bound, can be obtained
without any assumption on the prevalence by  using the following 
theorems, see \eqref{Bound.Se.direct.1} and \eqref{Bound.Se.direct.2}
below.

\begin{thm}\label{Thm.bound.Se1.new} 
Let $\beta\in[0,1]$, $n\in \N$, 
\la      \label{Eq:S.overline}
  \overline{S}: \big\{k\in\N_0^{\{0,1\}^2} : k_{++} = n\big\} \rightarrow [0,1]
\al
be a function,  
and   $ \cM := (\mathrm{M}_{n,q} : q \in \prob(\{0,1\}^2))$
be a quadrinomial model.

{\sc A.} $ \overline{S}$ is 
an upper $\beta$-confidence bound in the model 
$\cM$ and for  the parameter 
\la          \label{Param.interest.multinomial.2}
 q&\mapsto& 
 \frac{q^{}_{1+}}{ q^{}_{1+} + q^{}_{01}} 
 \vee\left(\frac{q^{}_{11}}{\left( q^{}_{+1}-q^{}_{10}\right)^+}\wedge 1\right)
 \quad\text{ with } \quad\frac00 :=1
\al
iff it is so in the  restricted latent class model $\cP_{2,\le}$
and for the parameter 
\la                                  \label{Param.interest.restr.2}
  (\pi,\chi) &\mapsto&      \chi^{(2)}_{1|1} 
\al

{\sc B.} Let $\overline{S}$ obey the equivalent conditions from part A, and
let $\widetilde{S}$ be another such function. If  $\widetilde{S}$ is worse
than $\overline{S}$ as an upper $\beta$-confidence bound for 
$(\cP_{2,\le},\eqref{Param.interest.restr.2})$, then so it is for  
$(\cM,\eqref{Param.interest.multinomial.2})$. 

{\sc C.} If $ \overline{S}$ is admissible as a $\beta$-confidence bound for 
$(\cM, \eqref{Param.interest.multinomial.2})$, then so it is for 
 $(\cP_{2,\le}, \eqref{Param.interest.restr.2})$.
\end{thm}

We get a confidence bound for $(\cM,\eqref{Param.interest.multinomial.2})$,
as needed for applying 
Theorem~\ref{Thm.bound.Se1.new}~A, from confidence bounds in 
certain trinomial models, similarly to but slightly less obviously than for the
situation of  Theorem~\ref{Thm.main.result}: 

\begin{thm}\label{Thm.bound.Se1} 
Let $\beta\in[0,1]$, $n\in \N$, and 
$u: \{k\in \N_0^3 : k_+ \le n \}  \rightarrow [0,\infty]$ be
a function such that, for every $m\in\{0,\ldots,n\}$,
the  restriction of $u$ to     $\{k\in \N_0^3 : k_+ = m \}$ 
is an upper $\beta$-confidence bound
in the trinomial model $\big(\mathrm{M}_{m,p} : p \in \prob(\{1,2,3\})\big)$
and  for the parameter
\la               \label{Eq:Second.trinomial.parameter} 
 p& \mapsto&       
  \left(1-p_2\right) \vee 
 \left(\,\frac{1-p_1-p_2}{\left(1-2\,p_1\right)_{}^+}   \wedge 1 \,  \right)
 \quad\text{ with } \quad\frac 00 := 1
\al 
Then the function
\la      \label{u.without.k00}
  \big\{k\in\N_0^{\{0,1\}^2} : k_{++} = n\big\}\,\,\,\ni\,\,\,        
   k&\mapsto& u(k_{10}, k_{01}, k_{11})
\al
is an upper $\beta$-confidence bound for 
$(\cP_{2,\le},\eqref{Param.interest.restr.2})$.
\end{thm}

As we are not aware of a function $u$ as assumed in Theorem~\ref{Thm.bound.Se1}
and also  well-founded and easily available for practical computation,  
we use here the following ad hoc method:  
Let $u_0$ denote the Lloyd-Moldovan upper $\beta$-confidence bound
corresponding to the lower bound $\ell$ used above. Then, since 
\[
  \textrm{R.H.S.\eqref{Eq:Second.trinomial.parameter}}
 &=&   \left(1-p_2\right) \vee 
 \left(\,\frac{1-2\,p_1 + p_1 -p_2}{\left(1-2\,p_1\right)_{}^+}   \wedge 1 \,  \right)
 \quad\text{ with } \quad\frac 00 := 1 \\
 &\le & \left( 1+p_1-p_2 \right) \wedge 1
\] 
for $p\in\prob(\{1,2,3\})$, we may take $u:= \left(1+u_0\right)\wedge 1$ in 
Theorem~\ref{Thm.bound.Se1}. Applied to our data~\eqref{the.data},
this yields 
$u(k_{10}, k_{01}, k_{11})= \left( 1+ u_0(4,20,22) \right)\wedge 1 
= \mathtt{0.83}$  
and thus 
\la       \label{Bound.Se.direct.1}
  \chi^{(1)}_{1|1} &\le &  \mathtt{0.83}
\al
with confidence $0.95$, in the restricted latent class model without further
assumptions.

Going back to~\eqref{Diff.significant}, obtained under the restricted latent
class model, Part~A of Theorem~\ref{Thm.main.result} suggests that 
we should perhaps rather state 
\la      \label{Diff.significant.2}
  \chi^{(2)}_{1|1}- \chi^{(1)}_{1|1} &\ge& \frac{\mathtt{0.0320}}{\pi_1} 
  \,+\, \frac {1-\pi_1}{\pi_1}\, \big(\,\chi^{(2)}_{0|0}- \chi^{(1)}_{0|0}\,\big) 
\al
as a valid confidence statement under the full latent class model.
This not only makes obvious the effect of the possibility 
$ \chi^{(2)}_{0|0}- \chi^{(1)}_{0|0} <0$ in the larger model,
drastically decreasing the lower bound for the sensitivity difference,
but also the possibly drastic increase if we  actually  have 
$ \chi^{(2)}_{0|0}- \chi^{(1)}_{0|0} >0$ and $\pi_1$ rather small.

So far, we have for simplicity only considered part of 
the data
from~\cite{Mattner.Winterfeld.Mattner.2009,Mattner.Winterfeld.Mattner.2010}.
There, we actually applied  the direct test and three versions of the 
culture test, differing in the culture media used, 
to each of the 256~specimens. The media are called I, II, III 
in~\cite{Mattner.Winterfeld.Mattner.2009,Mattner.Winterfeld.Mattner.2010},
and here~\eqref{the.data} presents just the results for the direct test
and for the culture test with  medium~II.  
Bounds analogous to 
the above lower confidence bound for the sensitivity gain through 
culturing with medium II, with the exemplary assumption $\pi_1\le 0.15$,
were computed for media I and III,
resulting in $\mathtt{-0.04}$ for I (so no statistically significant gain here)
and $\mathtt{0.02}$ for III.
For obtaining the upper confidence bound on the sensitivity of the direct
test, without any assumption on the prevalence,
we compared  in~\cite{Mattner.Winterfeld.Mattner.2010} 
the direct test with the logical 
\verb|or|ing of the three culture tests,
which diagnoses a specimen as positive if at least one of the three does so,
yielding the data $k_{00}= 209$, $k_{01}=21$, $k_{10}=4$, $k_{11}=22$ 
rather than~\eqref{the.data}, and hence the confidence statement 
\la       \label{Bound.Se.direct.2}
 \chi^{(1)}_{1|1}&\le& \left( 1+ u_0(4,21,22) \right)\wedge 1 
 \,\,\,=\,\,\, \mathtt{0.81} 
\al

\subsection{The Lloyd-Moldovan confidence bound for a coordinate 
difference of a multinomial parameter} \label{Subs.Lloyd.Moldovan}
The best currently  available confidence bound $\ell$ 
as needed in Theorem~\ref{Thm.main.result} 
appears  to be the one proposed and implemented
by Lloyd and Moldovan~\cite{Lloyd.Moldovan.2007}:
To compute it, load their program into {\tt R} with  
\verb|load("sm_file_SIM2708_2")|, 
type \verb|bcl(cl.side=-1)|, 
where ``\verb|-1|'' asks for the lower rather than
the default upper bound obtainable with just \verb|bcl()|,
enter the three numbers $\mathtt{x}=k_1$, $\mathtt{t}=k_1+k_2$ 
und $\mathtt{n}=k_1+k_2+k_3$, with return after each, 
and then a few more returns,
assuming here $\beta=0.95$ for simplicity.

\subsection{Example:
Robust upper confidence bounds for the sensitivities
of diagnostic tests for coronary artery disease} 
\label{Subsec:Example.CASS}
This subsection uses part of a standard dataset,
given in~\cite[Table 5]{Leisenring.Alonzo.Pepe.2000} 
and~\cite[pp.~8, 17, 22]{Pepe.2003}  and drawn 
from~\cite{Weinert.et.al.1979},
to exemplify the final sentence of Subsection~\ref{Subsec:Intro.outline}.
We consider evaluating two diagnostic tests for
coronary artery disease~({\em CAD}). 
This disease is the most frequent cause of 
myocardic infarction, which in turn is the most frequent 
cause of death in developed countries. 

The first test considered is a dichotomized exercise 
stress test~({\em EST}\,), the second a  dichotomized chest pain 
history~({\em CPH}\,). These two tests and a dichotomized 
arteriography~({\em A}) were performed on each of 1465 men. The dataset is
a three-way table of counts $k\in\N_0^{\{0,1\}^3}$ with 
$k_{+++}=1465$ and with the indexing here corresponding to the ordering 
EST, CPH, A: 
$k_{000}=151$ men negative for all three tests,
$k_{001}= 25$ positive only for A,
$k_{010}=176$ positive only for CPH,
$k_{011}=183$, 
$k_{100}= 46$  positive only for EST,
$k_{101}= 29$,
$k_{110}= 69$,
$k_{111}=786$. 
As usual, it is assumed that the 1465 trivariate observables are independent
and identically distributed. Let $ k^{\mathrm{EST}} 
:= (k_{i+j} : (i,j)\in\{0,1\}^2)$ denote the marginal table
for just the results of EST and A, and let analogously
$ k^{\mathrm{CPH}} := (k_{+ij} : (i,j)\in\{0,1\}^2)$ 
be the marginal table for CPH and A. Thus 
\[
 \begin{array}{rr}
  k^{\mathrm{EST}}_{00} = 327 &\quad  k^{\mathrm{EST}}_{01}= 208 \\
  k^{\mathrm{EST}}_{10}= 115 &\quad  k^{\mathrm{EST}}_{11}= 815  
 \end{array}
 &\qquad&
\begin{array}{rr}
  k^{\mathrm{CPH}}_{00}= 197&\quad k^{\mathrm{CPH}}_{01}= \phantom{9}54 \\ 
 k^{\mathrm{CPH}}_{10}= 245&\quad k^{\mathrm{CPH}}_{11}= 969   %
\end{array}
\]
If, as in~\cite{Leisenring.Alonzo.Pepe.2000,Pepe.2003}, the 
test~A is assumed to be perfect, then  we  get the following 
four separate $95\%$ binomial confidence statements (ignoring corrections 
for quadruplicity) for the sensitivities $\Se^{\mathrm{EST}}$ and   
$\Se^{\mathrm{CPH}}$ and the specificities $\Sp^{\mathrm{EST}}$ 
and $\Sp^{\mathrm{CPH}}$ of the tests EST and CPH,
\la    \label{Eq:CASS.sensitivities}
 \mathtt{0.770} \le \Se^{\mathrm{EST}} \le \mathtt{0.821}&\qquad&
 \mathtt{0.931} \le \Se^{\mathrm{CPH}} \le  \mathtt{0.961}  \\
 \mathtt{0.696} \le \Sp^{\mathrm{EST}} \le  \mathtt{0.781} &\qquad&
\mathtt{0.398} \le \Sp^{\mathrm{CPH}}  \le \mathtt{0.494}  \nonumber
\al
using, for example, the {\tt R}-command \verb|binom.test(c(815,208))| 
for the first interval, and we may conclude 
that neither  EST nor CPH is sufficiently accurate.
The perfectness of A means that its specificity $\Sp^{\mathrm{A}}$
and  its sensitivity $\Se^{\mathrm{A}}$ are both equal to $1$,
or rather practically very nearly so.
Here the assumption  $\Sp^{\mathrm{A}}=1$ appears quite reasonable 
from the medical point of view, but  $\Se^{\mathrm{A}}=1$ does not.
Using now only the weaker assumption  
$\Sp^{\mathrm{EST}}\le\Sp^{\mathrm{A}}$ or $\Sp^{\mathrm{CPH}}\le\Sp^{\mathrm{A}}$,
respectively, we get the two separate $95\%$ upper confidence bound statements
\[
  \Se^{\mathrm{EST}} \le   \mathtt{0.945}
 &\qquad& \Se^{\mathrm{CPH}} \le 1
\]
using Theorem~\ref{Thm.bound.Se1} as in Subsection~\ref{Subsec.Main},
computing $u(115,208,815)$ and $u(245,54,969)$ with the ad hoc function 
$u$ indicated there. The second bound is unfortunately trivial, 
but the first,  while of course weaker than the 
statement from~\eqref{Eq:CASS.sensitivities} obtained under a much stronger
assumption, is still good enough to show that EST is far from perfect:
EST fails to diagnose CAD for at least every twentieth CAD patient.

\section{Auxiliary results on latent class  models}
\label{Sec.Aux.latent.class.models}
In this section we describe images,  under various  parameters of interest,
of the preimage 
$\mu^{-1}(\{q\})$ $=$ $\{\theta =(\pi,\chi)\in \Theta_d : \mu(\theta)=q\}$
in Subsections~\ref{Subs.Aux.LCM.1} and~\ref{Subs.Aux.LCM.2},
and of a similar preimage with $\Theta_{2,\le}$ in place of 
$\Theta_d$
in Subsection~\ref{Subs.Aux.LCM.3}, 
of a given $q \in\prob(\{0,1\}^d)$ under the function
$\mu$ defined in~\eqref{eq.def.mu}.
Informally speaking,  this amounts to determining 
the exact joint range of the possible values of the
prevalence, sensitivities, and specificities 
(Lem\-ma~\ref{Lem.d=1} for $d=1$ and 
Lem\-ma~\ref{Basic.lemma} for $d=2$), or certain functions of these  
(Lemmas~\ref{Lemma.1.3}-\ref{Lemma.2.16}),
assuming the density  $q$ of the joint test results 
as known or, in a more practical interpretation,
estimated with high accuracy 
from a very large sample of joint test results. 
For example, using here, for the purpose of illustration  only, 
$q = \hat{q}:=\frac k{k_{++}}$ based on the data $k$ from~\eqref{the.data},
that is 
\[
 \begin{array}{cl|l}
  \hat{q}^{}_{00}= 0.820&\quad  \hat{q}^{}_{01}= 0.078&  \hat{q}^{}_{0+}=0.90\\
  \hat{q}^{}_{10}= 0.016&\quad  \hat{q}^{}_{11}= 0.083&  \hat{q}^{}_{1+}=0.10\\\hline
  \hat{q}^{}_{+0}=0.84\phantom{0}&\quad  \hat{q}^{}_{+1}=0.16\phantom{0}&  \hat{q}^{}_{++}=1
 \end{array}
\]
the pictures  of $C$ and  $C_\le$ displayed below near the corresponding 
Lemmas~\ref{Lemma.3.4.new} and~\ref{Lemma.1.5} 
show as hatched regions
the exact joint ranges of the possible values of the
prevalence and the sensitivity difference, the first in the full latent
class model, and the second in the restricted  one.

All these lemmas, needed to prove 
Theorems~\ref{Thm.main.result} and~\ref{Thm.bound.Se1}
in Section~\ref{Sec.remaining.proofs} below,
are proved in Section~\ref{Sec.Proofs.Aux.latent.class.models},
where the less interesting results of
Subsections~\ref{Subs.Aux.LCM.1} and~\ref{Subs.Aux.LCM.2}
are used for obtaining 
the more important results of  Subsection~\ref{Subs.Aux.LCM.3}.

We have found it suggestive to denote  below certain ``variables''
with $\Pr$, $\Sp$, $\Se$, $\Sp_1$, $\Sp_2$, $\Se_1$,  $\Se_2$, and $\Delta\Se$. 
Perhaps it should be pointed out that, for example,
denoting a variable by  $\Se_1$ in Lem\-ma~\ref{Lemma.2.4}
does not imply  that~$\Se_1$ be the first coordinate of 
some tuple 
called $\Se$. This differs from our use of subscripts
for $\pi$ and $\chi$, for example in   the definition of $A$ in 
Lemma~\ref{Basic.lemma}, where $\pi_1$ is understood to be the 
last coordinate of $\pi=(\pi_0,\pi_1)$.

\subsection{The case $d=1$ and a partial reduction to it}\label{Subs.Aux.LCM.1}
In this subsection,  we write more precisely $\mu_d$ for the function $\mu$
from \eqref{eq.def.mu}. 
\begin{lem}  \label{Lem.d=1}
If $q\in \prob(\{0,1\})$, then
$
  \{(\pi_1,\chi^{}_{0|0},\chi^{}_{1|1})  : (\pi,\chi)\in\mu_1^{-1}(\{q \}) \}$
$=$ $\big\{\, (\Pr,\Sp, \Se)\in[0,1]^3 \,:\,
 (1-\mathrm{Pr} )\,  (1-\mathrm{Sp}) + \mathrm{Pr} \,\mathrm{Se} = q^{}_1\,\big\}$.
\end{lem}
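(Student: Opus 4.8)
The plan is to realize both sides as the solution set of a single affine equation, using the evident coordinatization of $\Theta_1$.

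First I would note that a parameter $\theta=(\pi,\chi)\in\Theta_1=\prob(\{0,1\})\times\markov(\{0,1\},\{0,1\})$ is completely and freely described by the triple $(\pi_1,\chi_{0|0},\chi_{1|1})\in[0,1]^3$: the missing entries are forced by $\pi_0=1-\pi_1$, $\chi_{1|0}=1-\chi_{0|0}$ and $\chi_{0|1}=1-\chi_{1|1}$, and these automatically lie in $[0,1]$. Thus $\theta\mapsto(\pi_1,\chi_{0|0},\chi_{1|1})$ is a bijection of $\Theta_1$ onto $[0,1]^3$, and it remains only to decide which triples correspond to a $\theta$ with $\mu_1(\theta)=q$.

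For this I would evaluate the marginal using \eqref{eq.def.mu}: $\bigl(\mu_1(\theta)\bigr)_1=\pi_0\chi_{1|0}+\pi_1\chi_{1|1}=(1-\pi_1)(1-\chi_{0|0})+\pi_1\chi_{1|1}$. Since both $q$ and $\mu_1(\theta)$ are densities on $\{0,1\}$, agreement in the coordinate $1$ forces agreement in the coordinate $0$; hence $\mu_1(\theta)=q$ holds iff the single scalar equation $(1-\pi_1)(1-\chi_{0|0})+\pi_1\chi_{1|1}=q_1$ holds. Identifying the right-hand variables with their usual meaning, $\Pr=\pi_1$ (prevalence), $\Se=\chi_{1|1}$ (sensitivity) and $\Sp=\chi_{0|0}$ (specificity), this scalar equation is literally the constraint $(1-\Pr)(1-\Sp)+\Pr\,\Se=q_1$ defining the right-hand set. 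Both inclusions then follow from the bijection of the previous step: a triple lies in the left set iff the corresponding $\theta$ satisfies $\mu_1(\theta)=q$ iff the triple satisfies the affine constraint.

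I do not expect a genuine obstacle; the entire content is the collapse of the vector identity $\mu_1(\theta)=q$ to one scalar equation, together with the harmless bookkeeping of which coordinates of $\theta$ are free and which are determined. The only place asking for a moment's care is the pairing of the suggestively named variables $\Pr,\Se,\Sp$ with the coordinates $\pi_1,\chi_{0|0},\chi_{1|1}$: it is the natural one just given (prevalence, sensitivity, specificity), under which the two affine constraints become identical and the asserted set equality holds.
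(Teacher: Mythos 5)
Your proof is correct and follows essentially the same route as the paper's: compute $(\mu_1(\theta))_1=(1-\pi_1)(1-\chi_{0|0})+\pi_1\chi_{1|1}$ for one inclusion, and for the other reconstruct $\theta$ from a triple by filling in the forced complementary entries $\pi_0$, $\chi_{1|0}$, $\chi_{0|1}$. Your explicit remark on pairing $\Pr,\Sp,\Se$ with $\pi_1,\chi_{0|0},\chi_{1|1}$ by meaning rather than by the literal order displayed in the lemma is exactly the reading the paper's own proof adopts.
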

We recall the dot notation for functions explained in Subsection~\ref{Subsec.Math}.
\begin{lem} \label{Lem.d=2.d=1}
Let $q\in \prob(\{0,1\}^2)$. Then
\la
  \{(\pi,\chi^{(1)}) : (\pi,\chi)\in\mu_2^{-1}(\{q \}) \}
 &=& \mu_1^{-1}(\{q_{\cdot+}\}) \label{eq.mu2mu1.1} \\
 \{(\pi,\chi^{(2)}) : (\pi,\chi)\in\mu_2^{-1}(\{q \}) \} \label{eq.mu2mu1.2}
 &=& \mu_1^{-1}(\{q_{+\cdot}\})
\al
\end{lem}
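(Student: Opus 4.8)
The plan is to prove the first equality~\eqref{eq.mu2mu1.1} and to deduce the second~\eqref{eq.mu2mu1.2} from it by the symmetry that transposes the two bits indexing $\{0,1\}^2$: this symmetry sends $q$ to the density $q'$ with $q'_{ba}=q_{ab}$, is equivariant under $\mu_2$, carries $\chi^{(2)}$ into the first-test marginal of the transposed kernel, and satisfies $q'_{\cdot+}=q_{+\cdot}$, so that~\eqref{eq.mu2mu1.2} is just~\eqref{eq.mu2mu1.1} read for the transposed data. For~\eqref{eq.mu2mu1.1} I would prove the two inclusions separately.

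The inclusion ``$\subseteq$'' is immediate from the fact that forming the first-test marginal commutes with $\mu$. Indeed, if $(\pi,\chi)\in\mu_2^{-1}(\{q\})$, then for $a\in\{0,1\}$ one computes, using~\eqref{Eq.Def.partial.chi}, $(\mu_1(\pi,\chi^{(1)}))_a=\sum_i\pi_i\chi^{(1)}_{a|i}=\sum_i\pi_i\sum_b\chi_{ab|i}=\sum_b\sum_i\pi_i\chi_{ab|i}=\sum_b q_{ab}=q_{a+}=(q_{\cdot+})_a$, where the third equality merely interchanges two finite sums. Hence $(\pi,\chi^{(1)})\in\mu_1^{-1}(\{q_{\cdot+}\})$.

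The reverse inclusion ``$\supseteq$'' carries the actual content: given $(\pi,\psi)\in\mu_1^{-1}(\{q_{\cdot+}\})$, so that $\sum_i\pi_i\psi_{a|i}=q_{a+}$ for $a\in\{0,1\}$, I must produce some $\chi\in\markov(\{0,1\},\{0,1\}^2)$ with $\chi^{(1)}=\psi$ and $\mu_2(\pi,\chi)=q$. I would simply split each marginal entry according to the conditional second-coordinate law of $q$, setting $\chi_{ab|i}:=\psi_{a|i}\,q_{ab}/q_{a+}$ for $a,b,i\in\{0,1\}$. Summing over $b$ gives $\sum_b\chi_{ab|i}=\psi_{a|i}(q_{a0}+q_{a1})/q_{a+}=\psi_{a|i}$, so $\chi$ is a genuine transition density with $\chi^{(1)}=\psi$; summing against $\pi$ gives $\sum_i\pi_i\chi_{ab|i}=(q_{ab}/q_{a+})\sum_i\pi_i\psi_{a|i}=(q_{ab}/q_{a+})\,q_{a+}=q_{ab}$, so $\mu_2(\pi,\chi)=q$.

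The only point needing attention, and the spot where I expect the sole friction, is the degenerate index set $\{a:q_{a+}=0\}$, where the defining formula reads $0/0$. There $\sum_i\pi_i\psi_{a|i}=0$ forces $\psi_{a|i}=0$ for every $i$ with $\pi_i>0$, so the row in question contributes $0$ to $\mu_2$ however its mass is distributed over $b$; reading $0/0$ as $1/2$ at these indices keeps the single formula valid, since then $q_{a0}/q_{a+}+q_{a1}/q_{a+}=\tfrac12+\tfrac12=1$ still yields $\chi^{(1)}=\psi$, while $\sum_i\pi_i\chi_{ab|i}=\tfrac12 q_{a+}=0=q_{ab}$. Apart from this convention the argument is pure bookkeeping of finite sums, and no feasibility obstruction ever arises, since the prescribed split is realized explicitly by the formula above.
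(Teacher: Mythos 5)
Your proof is correct and follows essentially the same route as the paper's: the ``$\subseteq$'' direction by interchanging finite sums, and the ``$\supseteq$'' direction by the identical construction $\chi^{}_{ab|i}:=\psi^{}_{a|i}\,q^{}_{ab}/q^{}_{a+}$ with the same $0/0:=\tfrac12$ convention (the paper writes this as $\chi^{}_{j|i}:=\frac{q_j}{q_{j_1+}}\psi^{}_{j_1|i}$), and the second equality obtained by symmetry, which the paper simply calls ``analogous''. Your explicit verification of the degenerate rows $q^{}_{a+}=0$ is a welcome bit of extra care that the paper leaves implicit.
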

\subsection{The case $d=2$ for the full latent class model}\label{Subs.Aux.LCM.2}
In this subsection and  in the next one, 
we return to the shorter notation $\mu$ instead of $\mu_2$, and we assume that 
$q\in\prob(\{0,1\}^2)$ is fixed. 
\begin{lem} \label{Basic.lemma}
$ A := \big\{\,\left(\pi^{}_1,\chi^{(1)}_{0|0},\chi^{(1)}_{1|1},\chi^{(2)}_{0|0},
 \chi^{(2)}_{1|1}\right) \, :\,   (\pi,\chi)\in \mu^{-1}(\{q\})  \big\} $
is the nonempty
set of all $(\Pr,\Sp_1,\Se_1,\Sp_2,\Se_2) \in [0,1]^5$ 
satisfying  the relations 
\la 
 (1-\Pr )\,  (1-\Sp_1) + \Pr \,\Se_1 &=& q^{}_{1+} \label{eq.cond1}\\ 
(1-\Pr )\,  (1-\Sp_2) + \Pr \,\Se_2  &=& q^{}_{+1} \label{eq.cond2}\\
 (1-\Pr )\,\Sp_1 \wedge \Sp_2   + \Pr\,(1-\Se_1 \vee \Se_2)
  &\ge& q^{}_{00} \label{eq.cond3}  \\
  (1-\Pr )\,(\Sp_1+\Sp_2-1)_{}^+ + \Pr\,(1-\Se_1-\Se_2)_{}^+
  &\le&  q^{}_{00}  \label{eq.cond4}
\al
or, equivalently, 
\la
 \Pr\,(\Se_2-\Se_1)&=&(1-\Pr)(\Sp_2-\Sp_1)+q^{}_{01}-q^{}_{10}\label{eq.cond5}\\
 \Pr\,(\Se_1+\Se_2-1)&=&(1- \Pr)(\Sp_1+\Sp_2-1)+q^{}_{11}-q^{}_{00}\label{eq.cond6}\\
 -q^{}_{10} \,\,\le\,\,\Pr\,(\Se_2-\Se_1)&\le& q^{}_{01}\label{eq.cond7}\\ 
\qquad -q^{}_{00}\,\,\le\,\,\Pr\,(\Se_1+\Se_2-1) &\le & q^{}_{11}\label{eq.cond8}
\al
\end{lem}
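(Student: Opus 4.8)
The plan is to reduce the description of $A$ to the classical Fréchet--Hoeffding description of $2\times2$ probability tables with prescribed margins, applied separately to the two latent classes. Writing $\Pr=\pi_1$, any $\theta=(\pi,\chi)\in\mu^{-1}(\{q\})$ is determined by its two conditional tables $\chi_{\cdot|0}$ and $\chi_{\cdot|1}$ on $\{0,1\}^2$, with $\mu(\theta)_{jk}=(1-\Pr)\chi_{jk|0}+\Pr\chi_{jk|1}$. First I would read off the two marginal constraints: projecting to the single-test characteristics and invoking \eqref{eq.mu2mu1.1} of Lemma~\ref{Lem.d=2.d=1} together with Lemma~\ref{Lem.d=1} for the test-$1$ margin $q_{\cdot+}$, and \eqref{eq.mu2mu1.2} together with Lemma~\ref{Lem.d=1} for the test-$2$ margin $q_{+\cdot}$, yields exactly \eqref{eq.cond1} and \eqref{eq.cond2}, and shows that no further constraint is imposed on $(\Pr,\Sp_1,\Se_1)$ or on $(\Pr,\Sp_2,\Se_2)$ individually. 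All membership information beyond the margins is therefore carried by the single free cell in each conditional table, and the two tables interact only through the cell $q_{00}$.

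Second, I would localize this interaction. With the margins fixed as above, the Fréchet--Hoeffding bounds give $\chi_{00|0}\in[(\Sp_1+\Sp_2-1)^+,\Sp_1\wedge\Sp_2]$ and $\chi_{00|1}\in[(1-\Se_1-\Se_2)^+,1-\Se_1\vee\Se_2]$, each interval nonempty since the five numbers lie in $[0,1]$. As $(1-\Pr)\chi_{00|0}+\Pr\chi_{00|1}$ is a combination with nonnegative weights $1-\Pr$ and $\Pr$, its range over the product of these two intervals is precisely the interval $[L,U]$ whose endpoints are the right-hand sides of \eqref{eq.cond4} and \eqref{eq.cond3}; hence $q_{00}\in[L,U]$, i.e. \eqref{eq.cond3} and \eqref{eq.cond4}, is necessary. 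For sufficiency, given a tuple in $[0,1]^5$ satisfying \eqref{eq.cond1}--\eqref{eq.cond4}, I would choose $\chi_{00|0},\chi_{00|1}$ in their Fréchet intervals with $(1-\Pr)\chi_{00|0}+\Pr\chi_{00|1}=q_{00}$ (possible as $q_{00}\in[L,U]$), complete each conditional table from its cell and its margins (all entries then lying in $[0,1]$), and set $\pi=(1-\Pr,\Pr)$; matching margins together with the matched cell $q_{00}$ forces $\mu(\theta)=q$, since a $2\times2$ table is determined by its margins and one cell. Nonemptiness of $A$ is then clear, e.g. from the trivial coupling $\chi_{\cdot|i}=q$.

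Third, I would establish the equivalence with \eqref{eq.cond5}--\eqref{eq.cond8} purely algebraically, keeping $q_{++}=1$ in mind. Subtracting \eqref{eq.cond1} from \eqref{eq.cond2} gives \eqref{eq.cond5}, while adding them and using $q_{1+}+q_{+1}=1-q_{00}+q_{11}$ gives \eqref{eq.cond6}; both steps are reversible, so $\{\eqref{eq.cond5},\eqref{eq.cond6}\}$ is equivalent to $\{\eqref{eq.cond1},\eqref{eq.cond2}\}$. For the inequalities I would expand the $\wedge$, $\vee$ and positive parts via $\min_i a_i+\min_j b_j=\min_{i,j}(a_i+b_j)$ and its dual for $\max$, so that \eqref{eq.cond3} becomes the four inequalities $q_{00}\le(1-\Pr)\Sp_i+\Pr(1-\Se_j)$ for $i,j\in\{1,2\}$, and \eqref{eq.cond4} becomes $q_{00}\ge c+d$ with $c\in\{(1-\Pr)(\Sp_1+\Sp_2-1),0\}$ and $d\in\{\Pr(1-\Se_1-\Se_2),0\}$. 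Substituting the margin identities $(1-\Pr)\Sp_1+\Pr(1-\Se_1)=q_{0+}$ and $(1-\Pr)\Sp_2+\Pr(1-\Se_2)=q_{+0}$ (and $q_{0+}+q_{+0}-1=q_{00}-q_{11}$), in each group the two ``diagonal'' choices collapse to the automatic positivity of $q_{01},q_{10}$ from \eqref{eq.cond3} and of $q_{00},q_{11}$ from \eqref{eq.cond4}, whereas the two ``cross'' choices become exactly the bounds in \eqref{eq.cond7} and \eqref{eq.cond8} respectively.

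I expect the main obstacle to be organizational rather than conceptual. The conceptual crux is the observation that, once the margins are pinned down by \eqref{eq.cond1} and \eqref{eq.cond2}, the sole remaining degree of freedom in each latent class is one cell, and the two cells meet only in $q_{00}$; after that, necessity is immediate and sufficiency is the construction above. The delicate points will be verifying that the range of $(1-\Pr)\chi_{00|0}+\Pr\chi_{00|1}$ is the \emph{full} interval $[L,U]$ rather than a proper subset, and managing the case splits hidden in the $\wedge$, $\vee$ and $(\cdot)^+$ so that the eight expanded inequalities reduce cleanly, via the margin equations, to the four genuine bounds of \eqref{eq.cond7}--\eqref{eq.cond8} with the remaining four degenerating to positivity of the $q_{jk}$.
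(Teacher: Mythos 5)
Your proposal is correct and follows essentially the same route as the paper's own proof: reduce $\mu(\theta)=q$ to the two margin equations plus the single equation for the cell $q_{00}$, bound the free cells $\chi_{00|0},\chi_{00|1}$ by their Fr\'echet--Hoeffding intervals to get \eqref{eq.cond3}--\eqref{eq.cond4} as necessary, use the intermediate-value/connectedness argument on $(1-\Pr)\chi_{00|0}+\Pr\chi_{00|1}$ to construct $\chi$ for sufficiency, and then obtain \eqref{eq.cond5}--\eqref{eq.cond8} by adding/subtracting the margin equations and expanding the $\wedge$, $\vee$ and positive parts into the same four-case analysis. The only differences are cosmetic (citing Lemmas~\ref{Lem.d=1} and~\ref{Lem.d=2.d=1} for the margins rather than rederiving them, and a slightly different but equivalent organization of the case splits).
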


\begin{lem}\label{Lemma.1.3}
$ B :=
 \big\{\,\left(\pi^{}_1,\chi^{(1)}_{1|1},\chi^{(2)}_{1|1}  \right)\, :\,
   (\pi,\chi)\in \mu^{-1}(\{q\})
 \big\} $
is the  nonempty set of all 
$(\Pr,\Se_1,\Se_2) \in [0,1]^3$ satisfying  the relations 
\eqref{eq.cond7}, \eqref{eq.cond8}
and 
\la
  \Pr - q^{}_{0+}&\le&\Pr\, \Se_1 \,\,\le\,\,q^{}_{1+} \label{eq.cond10} \\
  \Pr - q^{}_{+0}&\le&\Pr\, \Se_2 \,\,\le\,\,q^{}_{+1} \label{eq.cond11} 
\al 
\end{lem}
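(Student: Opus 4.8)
The plan is to obtain $B$ as the projection of the set $A$ from Lemma~\ref{Basic.lemma} onto the coordinates $(\Pr,\Se_1,\Se_2)$, i.e. to eliminate the two specificity coordinates $\Sp_1,\Sp_2$. The coordinate projection $(\Pr,\Sp_1,\Se_1,\Sp_2,\Se_2)\mapsto(\Pr,\Se_1,\Se_2)$ carries $A$ onto $B$, since both sets are images of the same preimage $\mu^{-1}(\{q\})$ and the two defining coordinate maps differ only by dropping $\chi^{(1)}_{0|0}$ and $\chi^{(2)}_{0|0}$. Hence $(\Pr,\Se_1,\Se_2)\in B$ iff there exist $\Sp_1,\Sp_2\in[0,1]$ with $(\Pr,\Sp_1,\Se_1,\Sp_2,\Se_2)\in A$, and $B\neq\emptyset$ because $A\neq\emptyset$ by Lemma~\ref{Basic.lemma}. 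It then suffices to characterize, for each fixed $(\Pr,\Se_1,\Se_2)\in[0,1]^3$, the solvability in $\Sp_1,\Sp_2\in[0,1]$ of the relations defining $A$.

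For this I would use the second description of $A$ supplied by Lemma~\ref{Basic.lemma}, namely \eqref{eq.cond5}--\eqref{eq.cond8}, because there the specificities enter only through \eqref{eq.cond5} and \eqref{eq.cond6}, whereas \eqref{eq.cond7} and \eqref{eq.cond8} involve $(\Pr,\Se_1,\Se_2)$ alone and so pass to $B$ unchanged. The key observation is that the pair \eqref{eq.cond5}, \eqref{eq.cond6} is equivalent to the pair \eqref{eq.cond1}, \eqref{eq.cond2}: taking their difference and sum and using $q_{++}=1$ recovers \eqref{eq.cond1} and \eqref{eq.cond2}, and this linear change is invertible. In \eqref{eq.cond1} only $\Sp_1$ occurs and in \eqref{eq.cond2} only $\Sp_2$, so the two specificities are decoupled. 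Consequently the existence of a joint solution $(\Sp_1,\Sp_2)\in[0,1]^2$ reduces to the separate solvability of \eqref{eq.cond1} in $\Sp_1\in[0,1]$ and of \eqref{eq.cond2} in $\Sp_2\in[0,1]$.

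The remaining step is a one-variable range computation. Rewriting \eqref{eq.cond1} as $(1-\Pr)(1-\Sp_1)=q_{1+}-\Pr\,\Se_1$ and noting that the left-hand side sweeps out exactly the interval $[0,\,1-\Pr]$ as $\Sp_1$ runs through $[0,1]$, I get that a solution $\Sp_1\in[0,1]$ exists iff $0\le q_{1+}-\Pr\,\Se_1\le 1-\Pr$, which rearranges to $\Pr-q_{0+}\le\Pr\,\Se_1\le q_{1+}$, i.e. to \eqref{eq.cond10}; the symmetric treatment of \eqref{eq.cond2} yields \eqref{eq.cond11}. Combining this with the inherited \eqref{eq.cond7}, \eqref{eq.cond8} shows that $(\Pr,\Se_1,\Se_2)\in B$ iff \eqref{eq.cond7}, \eqref{eq.cond8}, \eqref{eq.cond10}, \eqref{eq.cond11} all hold, which is the assertion.

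The one point to watch is the degenerate case $\Pr=1$, where \eqref{eq.cond1}, \eqref{eq.cond2} no longer determine $\Sp_1,\Sp_2$ but instead constrain $\Se_1,\Se_2$. The interval-sweeping argument handles this uniformly: at $\Pr=1$ the interval $[0,\,1-\Pr]$ collapses to $\{0\}$, forcing $\Se_1=q_{1+}$, which is precisely what \eqref{eq.cond10} reads off at $\Pr=1$; so no separate case analysis is needed. I expect the only genuine care to lie in justifying the decoupling step cleanly and in confirming that \eqref{eq.cond7}, \eqref{eq.cond8} are truly extra constraints on $B$ rather than consequences of \eqref{eq.cond10}, \eqref{eq.cond11} (they are not, since the former constrain $\Se_1,\Se_2$ jointly while the latter constrain them separately), so that listing all four relations in the statement is indeed necessary.
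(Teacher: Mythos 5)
Your proposal is correct and follows essentially the same route as the paper: project $A$ onto $(\Pr,\Se_1,\Se_2)$, note that \eqref{eq.cond7} and \eqref{eq.cond8} pass through unchanged, and observe that the decoupled equations \eqref{eq.cond1} and \eqref{eq.cond2} are solvable for $\Sp_1,\Sp_2\in[0,1]$ exactly when \eqref{eq.cond10} and \eqref{eq.cond11} hold. The paper's proof is just a terser version of yours (it leaves the interval-sweeping computation and the $\Pr=1$ case implicit).
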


\begin{lem}\label{Lemma.3.4.new}
$ C :=\big\{\,\left(\pi^{}_1, \chi^{(2)}_{1|1} -   \chi^{(1)}_{1|1}\right)\, :\,
  (\pi,\chi)\in \mu^{-1}(\{q\})  \big\} $
is the  nonempty set of all $(\Pr,\Delta\Se)\in[0,1]\times[-1,1]$ satisfying  
the  inequalities
\la        \label{eq.cond14}
  (-q^{}_{10})\vee (q^{}_{01}-q^{}_{10} +\Pr-1)  
  &\le& \Pr\,\Delta\Se  \\  
  &\le&  \nonumber
  q^{}_{01}\wedge (q^{}_{01}-q^{}_{10} + 1-\Pr )
\al
\end{lem}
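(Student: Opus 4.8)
The plan is to realize $C$ as the image of the set $B$ from Lemma~\ref{Lemma.1.3} under the linear map $(\Pr,\Se_1,\Se_2)\mapsto(\Pr,\Se_2-\Se_1)$, and then to compute that image. Writing $\Delta\Se=\Se_2-\Se_1$, nonemptiness of $C$ is inherited from that of $B$, and since $B$ is convex and the map linear, for each fixed $\Pr$ the attainable values of $\Pr\,\Delta\Se$ form an interval; the task is to identify its endpoints with the two sides of \eqref{eq.cond14}.

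For the inclusion "$\subseteq$" I would read off all four bounds directly from the relations defining $B$. The bounds $-q_{10}\le\Pr\,\Delta\Se\le q_{01}$ are exactly \eqref{eq.cond7}. The remaining two come from combining the extreme sides of \eqref{eq.cond10} and \eqref{eq.cond11}: from $\Pr\,\Se_2\le q_{+1}$ and $\Pr\,\Se_1\ge\Pr-q_{0+}$ one gets $\Pr\,\Delta\Se\le q_{+1}+q_{0+}-\Pr=q_{01}-q_{10}+1-\Pr$, using the marginal identity $q_{+1}+q_{0+}=1+q_{01}-q_{10}$; symmetrically $\Pr\,\Se_1\le q_{1+}$ and $\Pr\,\Se_2\ge\Pr-q_{+0}$ give $\Pr\,\Delta\Se\ge\Pr-q_{+0}-q_{1+}=q_{01}-q_{10}+\Pr-1$. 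As a cross-check, these last two also follow in one line from relation \eqref{eq.cond5} of Lemma~\ref{Basic.lemma}, since there $\Sp_2-\Sp_1\in[-1,1]$ and $1-\Pr\ge0$.

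For the reverse inclusion I would fix $(\Pr,\Delta\Se)\in[0,1]\times[-1,1]$ satisfying \eqref{eq.cond14} and exhibit a preimage in $B$. The case $\Pr=0$ is trivial, since then all defining relations of $B$ hold for any $(\Se_1,\Se_2)$ with $\Se_2-\Se_1=\Delta\Se\in[-1,1]$. For $\Pr>0$ I would set $s:=\Pr\,\Delta\Se$, write $a:=\Pr\,\Se_1$ and $b:=\Pr\,\Se_2$, impose $b-a=s$ (which makes \eqref{eq.cond7} automatic as $-q_{10}\le s\le q_{01}$ by hypothesis), and regard \eqref{eq.cond8}, \eqref{eq.cond10}, \eqref{eq.cond11} together with $a,b\in[0,\Pr]$ as a list of lower and upper bounds on the single free variable $a$. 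It then suffices to show that the resulting $a$-interval is nonempty.

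The main obstacle is exactly this feasibility check: one must verify that each of the finitely many lower bounds on $a$ does not exceed each of the upper bounds. Using the marginal identities $q_{0+}+q_{+1}=1+q_{01}-q_{10}$ and $q_{1+}+q_{+0}=1-q_{01}+q_{10}$, every such cross-inequality reduces to $L\le s\le U$ (with $L,U$ the lower and upper sides of \eqref{eq.cond14}) together with $\Pr\le1$ and $q_{ij}\ge0$; for instance $\Pr-q_{0+}\le q_{+1}-s$ is just $s\le U$, while $\Pr-q_{+0}-s\le q_{1+}$ is just $s\ge L$. The delicate point is that a few of them, such as $(\Pr-q_{00}-s)/2\le q_{1+}$, hold only after invoking the \emph{sharper} of the two lower bounds in \eqref{eq.cond14}, namely $q_{01}-q_{10}+\Pr-1$ rather than $-q_{10}$ (there the slack reduces precisely to $q_{11}\ge0$). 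I would therefore organize the verification by listing the bounds on $a$ once and checking the cross-inequalities in a short table, each line being an elementary consequence of $L\le s\le U$ and $\sum q_{ij}=1$.
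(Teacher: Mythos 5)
Your proposal is correct, and its overall skeleton coincides with the paper's: both realize $C$ as the image of the set $B$ of Lemma~\ref{Lemma.1.3} under $(\Pr,\Se_1,\Se_2)\mapsto(\Pr,\Se_2-\Se_1)$, and both obtain the inclusion ``$\subseteq$'' by exactly the combinations you list (the two outer bounds from \eqref{eq.cond7}, the two inner ones from crossing \eqref{eq.cond10} with \eqref{eq.cond11}). Where you genuinely diverge is in the reverse inclusion for $\Pr>0$: you eliminate down to the single variable $a=\Pr\,\Se_1$ and verify feasibility by checking every lower bound against every upper bound (a Fourier--Motzkin table of $5\times5$ cross-inequalities), whereas the paper argues twice by connectedness/intermediate value --- first choosing $\Se_1,\Se_2$ in the intervals dictated by \eqref{eq.cond10}, \eqref{eq.cond11} so that $\Se_2-\Se_1=\Delta\Se$ (showing the extreme choices $(a_1,b_2)$ and $(b_1,a_2)$ bracket the target), and then translating the pair to also satisfy \eqref{eq.cond8}. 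Your route is more mechanical and makes visible exactly which hypothesis carries each step (your observation that the bound $(\Pr-q^{}_{00}-s)/2\le q^{}_{1+}$ needs the sharper term $q^{}_{01}-q^{}_{10}+\Pr-1$ of the lower bound, with slack exactly $q^{}_{11}\ge0$, is correct and is the kind of detail the paper's argument hides inside its max/min computations); the price is a longer, if mostly trivial, table, and a few of its lines also need $|s|\le\Pr$, i.e.\ $|\Delta\Se|\le1$, beyond the constraints you list. One small imprecision: $B$ itself is not convex (the constraints are bilinear in $(\Pr,\Se_i)$); only its fixed-$\Pr$ slices are, which is all your interval remark actually uses and all your proof needs.
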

\includegraphics[width=\linewidth,
  keepaspectratio
  ]{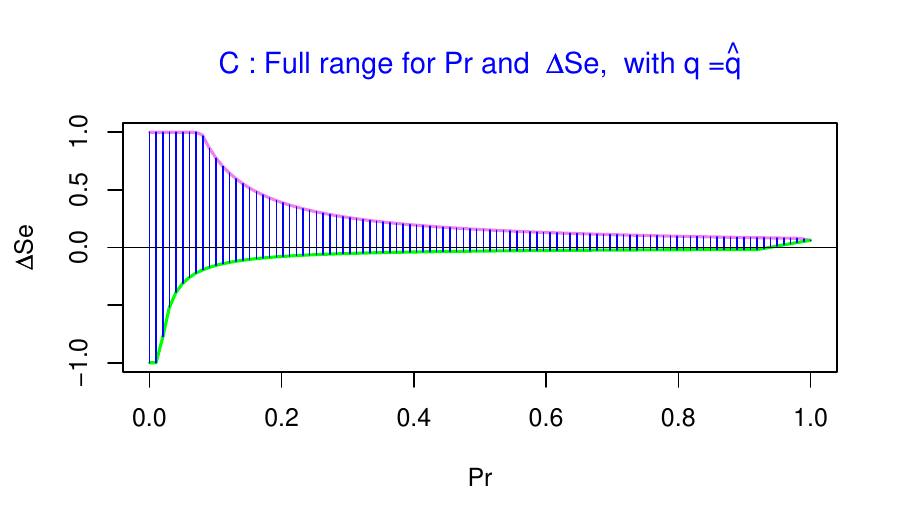}%

\begin{lem}\label{Lemma.2.4}
$  D :=\big\{\,\left(\pi^{}_1, \chi^{(1)}_{1|1}\right)\, :\,
  (\pi,\chi)\in \mu^{-1}(\{q\})  \big\} $
is the  nonempty set of all $(\Pr,\Se_1)\in[0,1]^2$ satisfying  
\eqref{eq.cond10}.
\end{lem}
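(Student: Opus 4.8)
The plan is to reduce the two‑test problem to the one‑test problem already settled in Subsection~\ref{Subs.Aux.LCM.1} and then to project out the specificity of the first test. Since the pair $(\pi^{}_1,\chi^{(1)}_{1|1})$ depends on $(\pi,\chi)$ only through the prevalence and the first‑test characteristic $\chi^{(1)}$, the natural first step is to invoke~\eqref{eq.mu2mu1.1} of Lemma~\ref{Lem.d=2.d=1} (with $\mu=\mu_2$ in the present subsection), which identifies $\{(\pi,\chi^{(1)}):(\pi,\chi)\in\mu^{-1}(\{q\})\}$ with $\mu_1^{-1}(\{q_{\cdot+}\})$. Hence $D$ is exactly the set of pairs $(\pi^{}_1,\chi^{}_{1|1})$ arising from $(\pi,\chi)\in\mu_1^{-1}(\{q_{\cdot+}\})$, i.e.\ the projection, forgetting the specificity coordinate $\chi^{}_{0|0}$, of the triple set described by Lemma~\ref{Lem.d=1} applied to the one‑test density $q_{\cdot+}$ (for which $(q_{\cdot+})_1=q_{1+}$).

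Writing $\Pr=\pi^{}_1$, $\Se_1=\chi^{(1)}_{1|1}$ and $\Sp_1=\chi^{(1)}_{0|0}$, Lemma~\ref{Lem.d=1} then says that $(\Pr,\Se_1)\in D$ iff there is some $\Sp_1\in[0,1]$ with
\[
 (1-\Pr)\,(1-\Sp_1)+\Pr\,\Se_1 &=& q_{1+}.
\]
The core of the argument is to eliminate $\Sp_1$ from this relation. Solving for the factor $1-\Sp_1$, and noting that $1-\Sp_1$ ranges over $[0,1]$ while $1-\Pr\ge 0$, such a $\Sp_1$ exists precisely when $0\le q_{1+}-\Pr\,\Se_1\le 1-\Pr$. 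The left inequality is $\Pr\,\Se_1\le q_{1+}$, and the right one rearranges, using $1-q_{1+}=q_{0+}$ since $q_{\cdot+}\in\prob(\{0,1\})$, to $\Pr\,\Se_1\ge \Pr-q_{0+}$. These are exactly the two bounds in~\eqref{eq.cond10}, so the projection coincides with the claimed set.

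I would handle the value $\Pr=1$ with a word, since there $1-\Pr=0$ collapses the admissible interval for $q_{1+}-\Pr\,\Se_1$ to $\{0\}$: the equation then forces $\Se_1=q_{1+}$ with $\Sp_1$ free, matching~\eqref{eq.cond10} read as $q_{1+}\le\Pr\,\Se_1\le q_{1+}$; the value $\Pr=0$ needs no special care, as the equation determines $\Sp_1=q_{0+}\in[0,1]$ and leaves $\Se_1$ free, consistently with~\eqref{eq.cond10} becoming vacuous. Nonemptiness of $D$ is immediate from $\mu^{-1}(\{q\})\neq\varnothing$ (take $\chi^{}_{j|i}=q^{}_j$, so $\mu(\theta)=q$), or directly from the point $(\Pr,\Se_1)=(q_{1+},1)$, which satisfies~\eqref{eq.cond10}. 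I anticipate no serious obstacle: the only subtlety is the boundary behaviour at $\Pr\in\{0,1\}$ when eliminating the specificity, and the fact that~\eqref{eq.cond10} is phrased as an inequality on the product $\Pr\,\Se_1$ rather than on $\Se_1$ itself, which is precisely what makes the elimination clean.
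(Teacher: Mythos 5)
Your proposal is correct and follows essentially the same route as the paper: reduce via Lemma~\ref{Lem.d=2.d=1}\eqref{eq.mu2mu1.1} to the one-test set $\mu_1^{-1}(\{q_{\cdot+}\})$, apply Lemma~\ref{Lem.d=1}, and eliminate $\Sp_1$ using $0\le\Sp_1\le1$, which yields exactly~\eqref{eq.cond10}. The paper handles the $\Pr=1$ boundary by the convention $0/0:=0$ in the explicit formula $\Sp_1:=1-(q_{1+}-\Pr\,\Se_1)/(1-\Pr)$, where you argue it by a short case distinction; the content is the same.
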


\begin{lem}\label{Lemma.2.5} 
$  E :=\big\{\,\left(\pi^{}_1, \chi^{(2)}_{1|1}\right)\, :\,
  (\pi,\chi)\in \mu^{-1}(\{q\})  \big\} $
is the  nonempty set of all $(\Pr,\Se_2)\in[0,1]^2$ satisfying  
\eqref{eq.cond11}.
\end{lem}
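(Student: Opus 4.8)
The plan is to reduce this two--test statement to the one--test description already available, mirroring the treatment of $D$ in Lemma~\ref{Lemma.2.4} but with the second test playing the role of the first.

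First I would apply~\eqref{eq.mu2mu1.2} of Lemma~\ref{Lem.d=2.d=1}, which identifies $\{(\pi,\chi^{(2)}) : (\pi,\chi)\in\mu^{-1}(\{q\})\}$ with $\mu_1^{-1}(\{q_{+\cdot}\})$. Hence, as $(\pi,\chi)$ ranges over $\mu^{-1}(\{q\})$, the pair $(\pi_1,\chi^{(2)}_{1|1})$ ranges over the projection of $\mu_1^{-1}(\{q_{+\cdot}\})$ onto its prevalence and sensitivity coordinates, in the notation $(\Pr,\Se,\Sp)$ of Lemma~\ref{Lem.d=1}. Applying that lemma with $q_{+\cdot}$ in place of $q$, so that $(q_{+\cdot})_1=q_{+1}$, gives
\[
 E &=& \big\{\,(\Pr,\Se_2)\in[0,1]^2 \,:\, \exists\,\Sp_2\in[0,1] \text{ with } (1-\Pr)(1-\Sp_2)+\Pr\,\Se_2 = q_{+1}\,\big\}
\]
with $\Se_2$ standing for the sensitivity $\chi^{(2)}_{1|1}$.

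Next I would eliminate the auxiliary specificity $\Sp_2$. For $\Pr<1$ the defining equation determines $\Sp_2$ uniquely, and the constraint $\Sp_2\in[0,1]$ is equivalent to $0\le q_{+1}-\Pr\,\Se_2\le 1-\Pr$; the left inequality is $\Pr\,\Se_2\le q_{+1}$, and the right one, using $1-q_{+1}=q_{+0}$, is $\Pr-q_{+0}\le\Pr\,\Se_2$, so that together they are exactly~\eqref{eq.cond11}. The single boundary case $\Pr=1$, where the equation forces $\Se_2=q_{+1}$ while~\eqref{eq.cond11} also collapses to $\Se_2=q_{+1}$, is checked directly. Nonemptiness is immediate from the point $\Pr=0$, $\Sp_2=q_{+0}$, with $\Se_2$ arbitrary.

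I expect no genuine obstacle: the computation is a routine projection, the only point of care being the division by $1-\Pr$, which is why $\Pr=1$ is isolated. In fact, once Lemma~\ref{Lemma.2.4} is granted, the shortest route is to observe that relabelling the two tests sends $\mu^{-1}(\{q\})$ to $\mu^{-1}(\{q^{\mathsf T}\})$, where $q^{\mathsf T}$ is the transpose of the $2\times2$ array $q$ (so that $q^{\mathsf T}_{\cdot+}=q_{+\cdot}$ and $q^{\mathsf T}_{1+}=q_{+1}$, $q^{\mathsf T}_{0+}=q_{+0}$) and interchanges $\chi^{(1)}$ with $\chi^{(2)}$; under this relabelling $E$ and~\eqref{eq.cond11} are carried to $D$ and~\eqref{eq.cond10}, so the assertion follows from Lemma~\ref{Lemma.2.4} without further work.
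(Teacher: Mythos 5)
Your proposal is correct and follows essentially the same route as the paper, whose proof of this lemma simply reads ``Same as above for Lemma~\ref{Lemma.2.4}'' --- that is, apply \eqref{eq.mu2mu1.2} of Lemma~\ref{Lem.d=2.d=1}, then Lemma~\ref{Lem.d=1} with $q_{+\cdot}$, and eliminate the specificity variable to obtain \eqref{eq.cond11}, exactly as you do. Your closing observation that the statement also follows from Lemma~\ref{Lemma.2.4} by transposing $q$ and relabelling the tests is a clean way of making the paper's ``same as above'' precise, but it is not a genuinely different argument.
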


\begin{lem}\label{Lemma.2.8}
\[
 F&:=& \big\{\,\left(\chi^{(2)}_{1|1}   - \chi^{(1)}_{1|1}\right)\, :\,
  (\pi,\chi)\in \mu^{-1}(\{q\})  \big\} \,\,\,=\,\,\, [-1,1] \\
 G&:=& \big\{\, \chi^{(1)}_{1|1} \, :\,
  (\pi,\chi)\in \mu^{-1}(\{q\})  \big\} \,\,\,=\,\,\, [0,1] \\
 H&:=& \big\{\, \chi^{(2)}_{1|1} \, :\,
  (\pi,\chi)\in \mu^{-1}(\{q\})  \big\} \,\,\,=\,\,\, [0,1] \\
 I&:=& \big\{\,\pi^{}_1 \, :\,(\pi,\chi)\in \mu^{-1}(\{q\}) 
 \big\} \,\,\,=\,\,\, [0,1]
\]
\end{lem}

\subsection{The case $d=2$ for the restricted  latent class model}\label{Subs.Aux.LCM.3}
We recall that $\mu$ denotes $\mu_2$ and that 
$q\in\prob(\{0,1\}^2)$ is fixed also in this subsection. 
Here we describe images $A_\le$ to $I_\le$
analogous to $A$ to $I$,
with  $\{\theta \in \Theta_{2,\le} : \mu(\theta)=q\}$
in place of $\{\theta \in \Theta_{2} : \mu(\theta)=q\}$.
We recall from Definition~\ref{Def:restr.lat.cl.mod}
that the subscript ``$\le$'' indicates  
that the specificity of the first test is assumed to be at most equal 
to that of the second.  Trivially, 
\[
 A_\le &:=&
\big\{\,
\left(\pi^{}_1,\chi^{(1)}_{0|0},\chi^{(1)}_{1|1},\chi^{(2)}_{0|0},\chi^{(2)}_{1|1}
  \right)
\, :\,
  (\pi,\chi)\in \mu^{-1}(\{q\}), \,  \chi^{(1)}_{0|0}\le \chi^{(2)}_{0|0}
 \big\} 
\]
is just the set of all  $(\Pr,\Sp_1,\Se_1,\Sp_2,\Se_2) \in A$
satisfying $\Sp_1\le\Sp_2$, and the nonemptyness  of this set  is proved  
at the beginning of Section~\ref{Sec.Proofs.Aux.latent.class.models} below.

\begin{lem}\label{Lemma.1.4}
$  B_\le:=\big\{\,\left(\pi^{}_1,\chi^{(1)}_{1|1},\chi^{(2)}_{1|1}\right)\, :\,
  (\pi,\chi)\in \mu^{-1}(\{q\}),\,  \chi^{(1)}_{0|0}\le \chi^{(2)}_{0|0}
   \big\} $
is the nonempty set of all $(\Pr,\Se_1,\Se_2) \in [0,1]^3$ satisfying the 
relations~\eqref{eq.cond8}, \eqref{eq.cond10}, \eqref{eq.cond11}, and 
\la
 q^{}_{01}-q^{}_{10}&\le&\Pr\,(\Se_2-\Se_1)
 \,\,\le\,\,q^{}_{01} \label{eq.cond12} 
\al
\end{lem}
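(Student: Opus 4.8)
The plan is to read off $B_\le$ from the full-model description $B$ of Lemma~\ref{Lemma.1.3}, using as the only real tool the identity \eqref{eq.cond5} of Lemma~\ref{Basic.lemma}. First I note that, since $q_{01}\ge 0$, the lower bound $q_{01}-q_{10}$ in \eqref{eq.cond12} dominates the lower bound $-q_{10}$ in \eqref{eq.cond7}, while the common upper bound $q_{01}$ is unchanged; thus \eqref{eq.cond12} implies \eqref{eq.cond7}, and the set $S$ of all $(\Pr,\Se_1,\Se_2)\in[0,1]^3$ satisfying \eqref{eq.cond8}, \eqref{eq.cond10}, \eqref{eq.cond11}, \eqref{eq.cond12} is contained in $B$. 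Nonemptiness of $B_\le$ needs no separate work: it is the projection of the set $A_\le$, whose nonemptiness is recorded just before Lemma~\ref{Lemma.1.4}. It then remains to prove the two inclusions $B_\le\subseteq S$ and $S\subseteq B_\le$.

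For the inclusion $B_\le\subseteq S$: any $(\Pr,\Se_1,\Se_2)\in B_\le$ lies in $B$ (since $A_\le\subseteq A$), so it already satisfies \eqref{eq.cond8}, \eqref{eq.cond10}, \eqref{eq.cond11} and the upper bound of \eqref{eq.cond12} (which coincides with that of \eqref{eq.cond7}). A witness $(\Sp_1,\Sp_2)$ with $\Sp_1\le\Sp_2$ and $(\Pr,\Sp_1,\Se_1,\Sp_2,\Se_2)\in A$ then yields, substituting into \eqref{eq.cond5},
\[
  \Pr\,(\Se_2-\Se_1) &=& (1-\Pr)(\Sp_2-\Sp_1)+(q_{01}-q_{10}) \,\,\ge\,\, q_{01}-q_{10},
\]
because $1-\Pr\ge 0$ and $\Sp_2-\Sp_1\ge 0$; this is the remaining lower bound of \eqref{eq.cond12}, and needs no distinction of cases.

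The reverse inclusion $S\subseteq B_\le$ is where the only care is needed. Take $(\Pr,\Se_1,\Se_2)\in S\subseteq B$; by Lemma~\ref{Lemma.1.3} there is a witness $(\Sp_1,\Sp_2)$ with $(\Pr,\Sp_1,\Se_1,\Sp_2,\Se_2)\in A$, and it remains to arrange $\Sp_1\le\Sp_2$. If $\Pr<1$, relations \eqref{eq.cond1} and \eqref{eq.cond2} determine $\Sp_1,\Sp_2$ uniquely from $(\Pr,\Se_1,\Se_2)$, so the witness is forced; substituting it into \eqref{eq.cond5} and invoking the lower bound of \eqref{eq.cond12} gives $(1-\Pr)(\Sp_2-\Sp_1)=\Pr(\Se_2-\Se_1)-(q_{01}-q_{10})\ge 0$, whence $\Sp_1\le\Sp_2$ automatically. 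If $\Pr=1$, the factors $1-\Pr$ vanish, so relations \eqref{eq.cond1}-\eqref{eq.cond4} constrain only $\Se_1,\Se_2$ and leave $\Sp_1,\Sp_2$ free; I may then replace the witness specificities by any common value, obtaining $\Sp_1\le\Sp_2$ while staying in $A$. In both cases $(\Pr,\Se_1,\Se_2)\in B_\le$, completing the proof. The main obstacle is precisely this degeneracy at $\Pr=1$: it is the one place where the specificities are not pinned down by $(\Pr,\Se_1,\Se_2)$, so the tightened constraint \eqref{eq.cond12} cannot be obtained from \eqref{eq.cond5} by the uniqueness argument, and one must instead exploit that $\Sp_1\le\Sp_2$ is vacuously achievable there.
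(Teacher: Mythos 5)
Your proof is correct and follows essentially the same route as the paper's: both directions are reduced to Lemma~\ref{Lemma.1.3} via the identity~\eqref{eq.cond5}, with the lower bound of~\eqref{eq.cond12} forcing $\Sp_1\le\Sp_2$ when $\Pr<1$ and the degenerate case $\Pr=1$ handled exactly as in the paper by choosing the (then unconstrained) specificities equal. The extra observation that the witness specificities are unique for $\Pr<1$ is harmless but not needed.
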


\begin{lem}\label{Lemma.1.5}
$  C_\le :=
\big\{\,\left(\pi^{}_1, \chi^{(2)}_{1|1} -   \chi^{(1)}_{1|1}  \right)
 \, :\,  (\pi,\chi)\in \mu^{-1}(\{q\})
  ,\,  \chi^{(1)}_{0|0}\le \chi^{(2)}_{0|0}    \big\} $ 
is the nonempty set of all $(\Pr,\Delta\Se)\in[0,1]\times[-1,1]$ satisfying  
the  inequalities
\la                         \label{eq.cond13} 
 q^{}_{01}-q^{}_{10}  &\le& \Pr\,\Delta\Se 
  \,\,\le\,\, q^{}_{01} \wedge ( q^{}_{01}-q^{}_{10}+1-\Pr  )
\al
\end{lem}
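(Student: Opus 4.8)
\emph{Proof plan.} The set $C_\le$ is the image of the restricted preimage $\{(\pi,\chi)\in\mu^{-1}(\{q\}) : \chi^{(1)}_{0|0}\le\chi^{(2)}_{0|0}\}$ under the map $(\pi,\chi)\mapsto(\pi_1,\chi^{(2)}_{1|1}-\chi^{(1)}_{1|1})$, and it is nonempty because that preimage is (as recorded for $A_\le$ at the start of the proofs section). Abbreviating $\Pr=\pi_1$, $\Se_i=\chi^{(i)}_{1|1}$, $\Sp_i=\chi^{(i)}_{0|0}$ and $\Delta\Se=\Se_2-\Se_1$, and letting $S$ denote the set of all $(\Pr,\Delta\Se)\in[0,1]\times[-1,1]$ satisfying~\eqref{eq.cond13}, my plan is to prove $C_\le=S$ by two inclusions. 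The whole argument rests on comparing with the unrestricted set $C$ of Lemma~\ref{Lemma.3.4.new} and on reading off the identity~\eqref{eq.cond5}, which ties $\Pr\,\Delta\Se$ to the specificity difference $\Sp_2-\Sp_1$; the extra hypothesis $\Sp_1\le\Sp_2$ will act precisely on the lower bound.

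For $C_\le\subseteq S$ I would argue as follows. Since $C_\le\subseteq C$, Lemma~\ref{Lemma.3.4.new} already supplies the common upper bound $\Pr\,\Delta\Se\le q_{01}\wedge(q_{01}-q_{10}+1-\Pr)$. For the lower bound, a point of $C_\le$ comes from some $(\pi,\chi)$ with $\Sp_1\le\Sp_2$, and then~\eqref{eq.cond5} gives $\Pr\,\Delta\Se=(1-\Pr)(\Sp_2-\Sp_1)+q_{01}-q_{10}\ge q_{01}-q_{10}$, which is exactly the lower bound in~\eqref{eq.cond13}.

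For $S\subseteq C_\le$ I first observe that the lower bound $q_{01}-q_{10}$ of $S$ dominates the lower bound $(-q_{10})\vee(q_{01}-q_{10}+\Pr-1)$ of $C$, because $q_{01}\ge 0$ and $\Pr\le 1$, while the two upper bounds coincide; hence $S\subseteq C$, so every $(\Pr,\Delta\Se)\in S$ is realized by some $(\pi,\chi)\in\mu^{-1}(\{q\})$. The key point is that for $\Pr<1$ no adjustment is needed: the identity~\eqref{eq.cond5} together with $\Pr\,\Delta\Se\ge q_{01}-q_{10}$ forces $(1-\Pr)(\Sp_2-\Sp_1)\ge 0$, i.e.\ $\Sp_1\le\Sp_2$, so this very $(\pi,\chi)$ lies in the restricted preimage and $(\Pr,\Delta\Se)\in C_\le$.

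The main obstacle is the boundary case $\Pr=1$, where~\eqref{eq.cond5} degenerates and the specificities decouple from $q$: since $\pi_0=0$, the negative-state row $\chi_{\cdot|0}$ contributes nothing to $\mu(\theta)$. Here~\eqref{eq.cond13} forces $\Delta\Se=q_{01}-q_{10}$, and the realizing parameter has $\chi_{\cdot|1}=q$, hence the fixed values $\Se_1=q_{1+}$ and $\Se_2=q_{+1}$, whereas $\chi_{\cdot|0}$ is free. I would therefore replace $\chi_{\cdot|0}$ by any row with $\Sp_1\le\Sp_2$, for instance the point mass at $(0,0)$, which yields $\Sp_1=\Sp_2=1$; this alters neither $\mu(\theta)=q$ nor $\Delta\Se$, and places $(\pi,\chi)$ in the restricted preimage. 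This handles $S\subseteq C_\le$ and completes the proof.
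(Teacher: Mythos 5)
Your proof is correct, but it takes a noticeably different route from the paper's. The paper factors everything through the intermediate set $B_\le$ of Lemma~\ref{Lemma.1.4}: the forward inclusion combines \eqref{eq.cond10}--\eqref{eq.cond11} (giving \eqref{Upper.bound.PrDeltaSe}) with \eqref{eq.cond12}, and the converse gets a triple $(\Pr,\Se_1,\Se_2)\in B$ from Lemma~\ref{Lemma.3.4.new}, sharpens \eqref{eq.cond7} to \eqref{eq.cond12} via the left-hand inequality of \eqref{eq.cond13}, and invokes Lemma~\ref{Lemma.1.4} again to land in $B_\le$; the delicate case $\Pr=1$ is thereby hidden inside the proof of Lemma~\ref{Lemma.1.4}, where the specificities are re-chosen as $\Sp_1=\Sp_2=\tfrac12$. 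You bypass $B_\le$ entirely and work at the level of the full parameter $(\pi,\chi)$: the identity \eqref{eq.cond5} makes transparent that for $\Pr<1$ the constraint $\Pr\,\Delta\Se\ge q_{01}-q_{10}$ is \emph{equivalent} to $\Sp_1\le\Sp_2$ for the realizing parameter, so no modification is needed there, and for $\Pr=1$ you exploit that the row $\chi_{\cdot|0}$ is decoupled from $\mu(\theta)$ and can be replaced by the point mass at $(0,0)$. Both arguments are sound; the paper's is more modular (it reuses a lemma needed anyway), while yours localizes exactly where the hypothesis $\Sp_1\le\Sp_2$ bites and is arguably more self-contained. One small caveat: you justify nonemptiness by citing the nonemptiness of the restricted preimage ``as recorded for $A_\le$,'' but in the paper that fact is itself \emph{deduced from} the present lemma (via the observation that $(1,q_{01}-q_{10})$ satisfies \eqref{eq.cond13}), so quoting it here would be circular. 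This is harmless in your write-up only because your inclusion $S\subseteq C_\le$ together with $(1,q_{01}-q_{10})\in S$ already yields nonemptiness independently; you should state it that way rather than by reference to $A_\le$.
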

\includegraphics[width=\linewidth,
  keepaspectratio
  ]{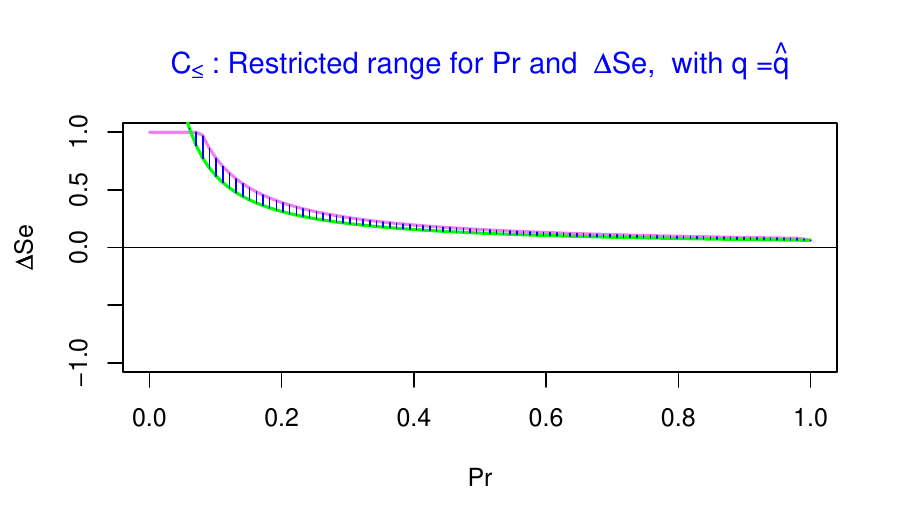}%

\begin{lem}\label{Lemma.2.10}
$ D_\le :=\big\{\,\left(\pi^{}_1,    \chi^{(1)}_{1|1}  \right)\, :\,
 (\pi,\chi)\in \mu^{-1}(\{q\})
     ,\,  \chi^{(1)}_{0|0}\le \chi^{(2)}_{0|0}   \big\} $ is 
the nonempty set of all $(\Pr,\Se_1)\in[0,1]^2$ satisfying  
the  inequalities
\la                          \label{eq.cond14A} 
 \quad\Pr-q^{}_{0+}  &\le& \Pr\, \Se_1 
 \,\,\,\le\,\,\,  q^{}_{1+} 
 \wedge\frac{\Pr + q^{}_{1+}-q^{}_{01}}{2}
 \wedge \left( \Pr + q^{}_{10}-q^{}_{01} \right)
\al
\end{lem}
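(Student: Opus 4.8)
The plan is to prove the asserted \emph{inclusion} by deriving \eqref{eq.cond14A} as a list of necessary conditions on $(\Pr,\Se_1)\in D_\le$; no realizability (converse) is claimed, so I will not attempt to exhibit an admissible $(\pi,\chi)$ for every pair satisfying \eqref{eq.cond14A}. First I would observe that $D_\le\subseteq D$, since the defining condition of $D_\le$ merely adds the constraint $\chi^{(1)}_{0|0}\le\chi^{(2)}_{0|0}$ to that of $D$. Hence by Lemma~\ref{Lemma.2.4} every $(\Pr,\Se_1)\in D_\le$ already lies in $[0,1]^2$ and satisfies \eqref{eq.cond10}, that is $\Pr-q^{}_{0+}\le\Pr\,\Se_1\le q^{}_{1+}$. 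This disposes of the lower bound in \eqref{eq.cond14A} and of the first alternative $q^{}_{1+}$ in its upper bound.

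The only genuinely new content is the second alternative $\Pr\,\Se_1\le\Pr+q^{}_{10}-q^{}_{01}$, and this is where the specificity restriction enters. To obtain it I would fix $(\pi,\chi)\in\mu^{-1}(\{q\})$ with $\chi^{(1)}_{0|0}\le\chi^{(2)}_{0|0}$ and let $(\Pr,\Sp_1,\Se_1,\Sp_2,\Se_2)\in A_\le$ denote the associated tuple, so that in particular $\Sp_1\le\Sp_2$. By Lemma~\ref{Basic.lemma} the equivalent relation \eqref{eq.cond5} holds, namely $\Pr\,(\Se_2-\Se_1)=(1-\Pr)(\Sp_2-\Sp_1)+q^{}_{01}-q^{}_{10}$. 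Since $\Sp_2-\Sp_1\ge 0$ and $1-\Pr\ge 0$, the right-hand side is at least $q^{}_{01}-q^{}_{10}$, whence $\Pr\,\Se_1\le\Pr\,\Se_2+q^{}_{10}-q^{}_{01}$. Using finally $\Se_2\le 1$, and therefore $\Pr\,\Se_2\le\Pr$, I conclude $\Pr\,\Se_1\le\Pr+q^{}_{10}-q^{}_{01}$.

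Combining this with the two bounds inherited from Lemma~\ref{Lemma.2.4} gives exactly $\Pr-q^{}_{0+}\le\Pr\,\Se_1\le q^{}_{1+}\wedge(\Pr+q^{}_{10}-q^{}_{01})$, which is \eqref{eq.cond14A}, so that every point of $D_\le$ lies in the stated set.

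I expect no serious obstacle: the argument is a one-line elimination that uses only the single sign information $\Sp_2-\Sp_1\ge 0$ from the restriction, together with the trivial bound $\Se_2\le 1$. The one point requiring care is conceptual rather than computational, namely that the statement is an inclusion and not an equality; the new upper bound comes from discarding the term $\Pr\,\Se_2$ by replacing $\Se_2$ with its extreme value $1$, and this crude step is presumably why equality need not (and is not claimed to) hold for $D_\le$, in contrast to the exact description of the unrestricted $D$ in Lemma~\ref{Lemma.2.4}.
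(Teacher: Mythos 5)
Your proof is correct and follows essentially the same route as the paper: the paper cites Lemma~\ref{Lemma.1.4} to get \eqref{eq.cond10} and the sharpened inequality $\Pr\,(\Se_2-\Se_1)\ge q^{}_{01}-q^{}_{10}$ (i.e.\ \eqref{eq.cond12}), then concludes with $\Se_2\le 1$ exactly as you do. You merely unwind that intermediate lemma by deriving the key inequality directly from \eqref{eq.cond5} and $\Sp_1\le\Sp_2$, which is the same computation.
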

\begin{lem}\label{Lemma.2.12}
$  E_\le :=\big\{\,\left(\pi^{}_1,\chi^{(2)}_{1|1}\right)\, :\,
  (\pi,\chi)\in \mu^{-1}(\{q\})
      ,\,  \chi^{(1)}_{0|0}\le \chi^{(2)}_{0|0}   \big\} $
is the nonempty set of all $(\Pr,\Se_2)\in[0,1]^2$ satisfying   the  inequalities
\la          \label{eq.cond14B} 
 \left(\Pr-q^{}_{+0}\right)  
 \vee \frac{\Pr +q^{}_{01}-q^{}_{+0} }2
 \vee \left(q^{}_{01}-q^{}_{10}\right)
 &\le& \Pr\,\Se_2
 \,\,\,\le\,\,\,   q^{}_{+1}
\al
\end{lem}
\begin{lem} \label{Lemma.2.13}
$F_\le :=   \big\{\,  \chi^{(2)}_{1|1}-   \chi^{(1)}_{1|1} \, :\,
(\pi,\chi)\in \mu^{-1}(\{q\}) ,\,  \chi^{(1)}_{0|0}\le \chi^{(2)}_{0|0}\big\}$ 
is the nonempty interval $[q^{}_{01}- q^{}_{10}, 1]$ if $q^{}_{01}-
q^{}_{10}>0$, and $[-1,1]$ if $q^{}_{01}-q^{}_{10}\le0$.
\end{lem}
\begin{lem}\label{Lemma.2.14} 
$G_\le := \big\{\,    \chi^{(1)}_{1|1}  \, :\,
 (\pi,\chi)\in \mu^{-1}(\{q\})     ,\,  \chi^{(1)}_{0|0}\le \chi^{(2)}_{0|0}
 \big\}$
is the nonempty interval
\[
 \left[\, 0\, ,\, \frac{q^{}_{1+}}{ q^{}_{1+} + q^{}_{01} } 
 \vee \left(\frac{q^{}_{11}}{\left( q^{}_{+1} - q^{}_{10}\right)^+} \wedge
   1\right)\,\right] 
\qquad \text{ with }\quad\frac 00 \,:=\, 1 
\]
\end{lem}
\begin{lem}\label{Lemma.2.15}  
$H_\le := \big\{\,    \chi^{(2)}_{1|1}  \, :\,
 (\pi,\chi)\in \mu^{-1}(\{q\})     ,\,  \chi^{(1)}_{0|0}\le \chi^{(2)}_{0|0}
 \big\}$
is the nonempty interval 
\[
 \left[\, \frac{q^{}_{01}}{q^{}_{0+}+ q^{}_{10}} \wedge
 \frac{\left(
     q^{}_{01}-q^{}_{10}\right)^+}{q^{}_{0+}-q^{}_{10}}\,\,,\,1\,\right]
\qquad \text{ with }\quad\frac 00 \,:=\, 1 
\]
\end{lem}
\begin{lem}\label{Lemma.2.16}  
$I_{\le} := \big\{\,    \pi^{}_{1}  \, :\,
 (\pi,\chi)\in \mu^{-1}(\{q\})     ,\,  \chi^{(1)}_{0|0}\le \chi^{(2)}_{0|0}
 \big\}$
is the nonempty interval
\[
 \left[\, \left(q^{}_{01} - q^{}_{10}\right)^+ \, ,\,
  \left(1-   \left(q^{}_{10} - q^{}_{01}\right)^+   \right)
  \vee \left(q^{}_{10} - q^{}_{01}\right)
 \,\right] 
\]
\end{lem}
\section{Proofs for Section~\ref{Sec.Aux.latent.class.models}}
\label{Sec.Proofs.Aux.latent.class.models}
Let us first address the nonemptyness of the sets $A$ to $I_\le$.
Below, we prove Lemmas~\ref{Basic.lemma}-\ref{Lemma.2.16} 
ignoring the word ``nonempty''. So, strictly speaking, we should 
rather write  something like ``Proof of Weak Lemma~\ref{Basic.lemma}'' 
and so on below. We next observe that, say,  the interval 
then known to equal $G_\le$  by Lemma~\ref{Lemma.2.14} 
is nonempty, as it contains zero.
Hence $A_\le$ is nonempty, since $G_\le$ is the image of 
$A_\le$ under some function. Hence $A\supseteq A_\le$  is nonempty.
Hence the remaining sets are nonempty, 
as they are  images of $A$ or  $ A_\le$ under certain functions.

\begin{proof}[Proof of Lemma~\ref{Lem.d=1}] 
If $(\pi,\chi)\in \mu_1^{-1}(\{q \})$, then we have in particular
$(\pi_1,\chi^{}_{0|0},\chi^{}_{1|1}) \in[0,1]^3$ 
and $(1-\pi_1)(1-\chi^{}_{0|0})+\pi_1\chi^{}_{1|1}=
\big(\mu_1(\pi,\chi)\big)_1=q_1$.
This shows that ``$\subseteq$'' holds in the claimed equality.
If, conversely, $(\Pr,\Sp,\Se)$ belongs to the second set,
and if we put $\pi_0:=1-\Pr$, $\pi_1:=\Pr$, $\chi^{}_{0|0}:=\Sp$,
$\chi^{}_{1|0}:= 1-\Sp$, $\chi^{}_{0|1}:=1-\Se$, $\chi^{}_{1|1}:=\Se$,
then $(\pi,\chi)\in \mu_1^{-1}(\{q \})$. Thus ``$\supseteq$'' holds as well.
\end{proof}
\begin{proof}[Proof of Lemma~\ref{Lem.d=2.d=1}]
If $(\pi,\chi)\in\mu_2^{-1}(\{q \})$, then $(\pi,\chi^{(1)})\in\Theta_1$
and for $\iota\in\{0,1\}$ we have
\[
\mu_1(\pi,\chi^{(1)} ) 
&=& \sum_{i=0}^1\pi_i\chi^{(1)}_{\iota|i} 
\,\,\,= \,\,\,  \sum_{i=0}^1\sum_{\kappa=0}^1 \pi_i \chi^{}_{\iota\kappa|i}
\,\,\,=\,\,\, \sum_{\kappa=0}^1q_{\iota\kappa}
\,\,\,=\,\,\, q_{\iota+}
\]
This proves ``$\subseteq$'' in \eqref{eq.mu2mu1.1}.
If $(\pi,\psi)\in \mu_1^{-1}(\{q_{\cdot+}\})$, then,  by 
\[
  \chi^{}_{j|i} &:=& \frac {q_j}{q_{j^{}_1+}}\psi^{}_{j^{}_1|i}
  \qquad\text{ for }(i,j)\in \{0,1\}\times\{0,1\}^2
\]
with the nonstandard convention $\frac 00:=\frac 12$,
we define a 
$\chi \in \markov(\{0,1\},\{0,1\}^2)$ with $\chi^{(1)}=\psi$ 
and
\[
 \big(\mu_2(\pi,\chi)\big)_j 
  &=& \sum_{i=0}^1\pi_i\chi^{}_{j|i} \,\,\,=\,\,\,
\frac {q_j}{q_{j^{}_1+}}\sum_{i=0}^1\pi_i \psi^{}_{j^{}_1|i}
\,\,\,=\,\,\, q_j
 \qquad\text{ for } j\in\{0,1\}^2
\]
This proves ``$\supseteq$'' in \eqref{eq.mu2mu1.1}.
The proof of \eqref{eq.mu2mu1.2} is analogous.
\end{proof} 
\begin{proof}[Proof of Lemma~\ref{Basic.lemma}]  
Call $A'$ the set claimed to equal $A$ up to line~\eqref{eq.cond4}.
If $\theta=(\pi,\chi)\in\Theta_2$, then obviously 
\la        \label{eq.pi1etc01}
(\pi^{}_1,\chi^{}_{0+|0},\chi^{}_{1+|1},\chi^{}_{+0|0},\chi^{}_{+1|1})&\in&[0,1]^5
\al
and, using \eqref{eq.def.lambda} and identities like $\pi^{}_0=1-\pi^{}_1$ 
and  $\chi^{}_{1+|0}   = 1-\chi^{}_{0+|0} $,
the condition $\mu(\theta)=q$ is seen to be equivalent to the system of
three equations 
\la
 (1-\pi^{}_1)(1-\chi^{}_{0+|0}) + \pi^{}_{1}\chi^{}_{1+|1}&=&q_{1+}\label{eq.q1+}\\
 (1-\pi^{}_1)(1-\chi^{}_{+0|0}) + \pi^{}_{1}\chi^{}_{+1|1}&=&q_{+1}\label{eq.q+1}\\
 (1-\pi^{}_1) \chi^{}_{00|0}+ \pi^{}_{1}\chi^{}_{00|1} &=& q_{00}  \label{eq.q00}
\al
Using first  $\chi^{}_{00|0} \le   \chi^{}_{0+|0} \wedge   \chi^{}_{+0|0}$
and  $\chi^{}_{00|1} \le \chi^{}_{0+|1} \wedge   \chi^{}_{+0|1}
= (1- \chi^{}_{1+|1})\wedge (1- \chi^{}_{+1|1})$
and second $  \chi^{}_{00|0}= \chi^{}_{0+|0}+\chi^{}_{+0|0}-(1-\chi^{}_{11|0})
\ge (\chi^{}_{0+|0}+\chi^{}_{+0|0}-1)^+_{} $  
and  $ \chi^{}_{00|1} =1+ \chi^{}_{11|1}- \chi^{}_{1+|1} -\chi^{}_{+1|1}
   \ge(1- \chi^{}_{1+|1}- \chi^{}_{+1|1} )^+_{}   $, we see that 
equation~\eqref{eq.q00} would imply  the two inequalities
\[
 (1-\pi^{}_1) \chi^{}_{0+|0} \wedge   \chi^{}_{+0|0} 
    +  \pi^{}_1   (1- \chi^{}_{1+|1})\wedge (1- \chi^{}_{+1|1}) &\ge & q_{00}   \\
(1-\pi^{}_1)(\chi^{}_{0+|0}+\chi^{}_{+0|0}-1)_{}^+
               + \pi^{}_1 (1- \chi^{}_{1+|1}- \chi^{}_{+1|1} )^+_{}  &\le & q_{00}
\]
Thus for $\theta\in\Theta_2$ with $\mu(\theta)=q$, the left hand side 
of~\eqref{eq.pi1etc01} belongs to $A'$.  Hence $A\subseteq A'$. 

To prove the reversed inclusion, let  $(\Pr,\Sp_1,\Se_1,\Sp_2,\Se_2) \in A'$.
Put $\pi =(\pi^{}_0, \pi^{}_1) := (1-\Pr,\Pr)$.
Choose two numbers
\[
 \chi^{}_{00|0} &\in& [ (\Sp_1+\Sp_2-1)_{}^+ ,\Sp_1 \wedge \Sp_2   ] \\
 \chi^{}_{00|1} &\in& [ (1-\Se_1-\Se_2)_{}^+ ,(1-\Se_1)\wedge(1-\Se_2)  ]
\] 
such that 
\[
  (1-\Pr )  \chi^{}_{00|0}    + \Pr\, \chi^{}_{00|1}  &=&  q^{}_{00}
\]
This is possible by connectedness, since the two intervals above are
nonempty and we would get ``$\le  q^{}_{00}$'' by choosing  the
lower endpoints and ``$\ge  q^{}_{00}$'' for the  upper ones.
Now put 
$$
\begin{array}[\displaystyle]{rclrcl}
 \chi^{}_{01|0}&:=&\Sp_1- \chi^{}_{00|0}&\qquad \chi^{}_{01|1}&:=&1-\Se_1-\chi^{}_{00|1}\\
 \chi^{}_{10|0}&:=&\Sp_2- \chi^{}_{00|0}&       \chi^{}_{10|1}&:=&1-\Se_2-\chi^{}_{00|1}\\
 \chi^{}_{11|0}&:=& 1- \Sp_1 - \Sp_2   +\chi^{}_{00|0}  
      & \chi^{}_{11|1}&:=& \Se_1 + \Se_2 -1   +\chi^{}_{00|1}  
\end{array}
$$       
Then $(\pi,\chi)\in \Theta_2$ satisfies the equations \eqref{eq.q1+}-\eqref{eq.q00},
so that   $\mu(\theta)=q$, and the 
corresponding element of $A$ is $(\Pr,\Sp_1,\Se_1,\Sp_2,\Se_2)$.
Hence we also have $A'\subseteq A$.

Obviously, equations \eqref{eq.cond1} and \eqref{eq.cond2}
are jointly equivalent to \eqref{eq.cond5} and \eqref{eq.cond6}, 
by addition and subtraction. 
In the presence of~\eqref{eq.cond1} and~\eqref{eq.cond2}, we have
\[
 (1-\Pr)\,\Sp_1\wedge\Sp_2 &=& 1-\Pr- (1-\Pr)(1-\Sp_1)\vee (1-\Sp_2)\\
  &=& 1-\Pr-(q^{}_{1+} -\Pr\,\Se_1)\vee(q^{}_{+1} -\Pr\,\Se_2)
\]
so that, by inserting and rearranging, inequality \eqref{eq.cond3}
is equivalent to 
\[ 
 \Pr\,\Se_1\vee\Se_2+(q^{}_{10}-\Pr\,\Se_1)\vee(q^{}_{01}-\Pr\,\Se_2)
 &\le&q^{}_{01}+q^{}_{10} 
\]
which,  by considering separately the four cases $a\vee b+c\vee d =a+c$ etc.,
simplifies to~\eqref{eq.cond7}.
Finally, in the presence of~\eqref{eq.cond7}, inequality  \eqref{eq.cond4}
is equivalent to $(q^{}_{00}-q^{}_{11}-x)_{}^+ + x_{}^+ \le q^{}_{00}$   
with $x:=\Pr\,(1-\Se_1-\Se_2)$, which simplifies to $-q^{}_{11}\le x \le
q^{}_{00}$, that is, \eqref{eq.cond8}.
\end{proof}
\begin{proof}[Proof of Lemma~\ref{Lemma.1.3}]  
Call $B'$ the set claimed to equal $B$. 
By Lem\-ma~\ref{Basic.lemma}, we have $(\Pr,\Se_1,\Se_2)\in B$
iff there exist $\Sp_1,\Sp_2 $ with 
the quintuple 
$(\Pr,\Sp_1,\Se_1,\Sp_2,\Se_2)\in[0,1]^5$
satisfying \eqref{eq.cond1}-\eqref{eq.cond4}
or, equivalently, \eqref{eq.cond5}-\eqref{eq.cond8}. 
So in this case, we have in particular 
  \eqref{eq.cond7} and \eqref{eq.cond8},
while \eqref{eq.cond1} and \eqref{eq.cond2} together with $\Sp_1,\Sp_2  \in[0, 1]$
yield  \eqref{eq.cond10} and \eqref{eq.cond11}. 

Conversely, if $(\Pr,\Se_1,\Se_2)\in B'$, then by \eqref{eq.cond10} and
\eqref{eq.cond11}   we can find $\Sp_1,\Sp_2\in[0,1]$ satisfying 
\eqref{eq.cond1} and \eqref{eq.cond2}, and hence
\eqref{eq.cond5} and \eqref{eq.cond6},
and thus $(\Pr,\Se_1,\Se_2)\in B$.
\end{proof}
\begin{proof}[Proof of Lemma~\ref{Lemma.3.4.new}]
Call $C'$ the set claimed to equal $C$. By Lem\-ma~\ref{Lemma.1.3}, we have
$(\Pr,\Delta\Se) \in C$ iff there exist $\Se_1,\Se_2$ with 
$(\Pr,\Se_1,\Se_2)\in [0,1]^3$ satisfying 
\eqref{eq.cond7}, \eqref{eq.cond8}, \eqref{eq.cond10}, \eqref{eq.cond11}, 
and $\Delta\Se = \Se_2- \Se_1$.

Let $(\Pr,\Delta\Se) \in C$ and let  $\Se_1,\Se_2$ be as just stated.
Then~\eqref{eq.cond10} and~\eqref{eq.cond11} yield
\la    \label{Upper.bound.PrDeltaSe}
 &&\Pr\,\Delta\Se \,=\, \Pr\, \Se_2 - \Pr\, \Se_1 
  \,\le\,  q^{}_{+1} - (\Pr- q^{}_{0+})
  \,=\,   q^{}_{01} -   q^{}_{10}     + 1-\Pr 
\al
and similarly 
\[
 &&\Pr\,\,\Delta\Se \,\,\ge \,\, \Pr-q^{}_{+0} - q^{}_{1+} 
 \, \,=\,\,q^{}_{01}-q^{}_{10} +\Pr-1 
\]
Together with \eqref{eq.cond7}, the above yields \eqref{eq.cond14}.

Conversely, let  $(\Pr,\Delta\Se) \in C'$. If $\Pr=0$, then 
we may put $\Se_1 := (\Delta\Se)^-$ and $\Se_2 := (\Delta\Se)^+$,
and observe that   $(\Pr,\Se_1,\Se_2)$ then satisfies 
\eqref{eq.cond7}, \eqref{eq.cond8}, \eqref{eq.cond10}, \eqref{eq.cond11},
and $\Delta\Se = \Se_2-\Se_1$.

So assume $\Pr>0$. By connectedness we can choose 
\[
 \Se_1 &\in& \Big[ \Big(1- \frac{q_{0+}}{\Pr}\Big)\vee 0\,,\,
 \frac{q_{1+}}{\Pr}\wedge 1  \Big] \,\,=:\,\,[a_1,b_1] \\
\Se_2 &\in& \Big[ \Big(1- \frac{q_{+0}}{\Pr}\Big)\vee 0\,,\,
 \frac{q_{+1}}{\Pr}\wedge 1  \Big] \,\,=:\,\,[a_2,b_2]
\]
in such a way that $\Se_2-\Se_1 =\Delta\Se$, since the two intervals
above are nonempty and since  taking  $\Se_1 =a_1$  and $\Se_2 =b_2$ would yield 
\[
 \Pr\, (\Se_2-\Se_1) &=& q^{}_{+1}\wedge \Pr \,-\, (\Pr - q^{}_{0+})\vee 0\\
  &=& \min\{ q^{}_{+1} -\Pr+ q^{}_{0+}\,,\, q^{}_{+1}\,,\,q^{}_{0+},\,\Pr     \} \\
  &\ge& \min\{ q^{}_{01}-q^{}_{10}+1-\Pr \,,\,  q^{}_{01}  \,,\,\Pr\,\Delta\Se \}\\
  &=& \Pr\,\Delta\Se
\]
using \eqref{eq.cond14} in the last step, 
while $\Se_1=b_1$ and $\Se_2 =a_2$  would similarly yield 
\[
 \Pr\, (\Se_2-\Se_1) &=&  (\Pr - q^{}_{+0})\vee 0 \,-\,q^{}_{1+}\wedge \Pr\\
  &=& \max \{\Pr - q^{}_{+0}- q^{}_{1+} \,,\,  - q^{}_{+0} \,,\,
      - q^{}_{1+}  \,,\,-\Pr  \} \\
  &\le & \max\{  q^{}_{01}- q^{}_{10} +\Pr-1 \,,\,-q^{}_{10}\,,\, \Pr\,\Delta\Se  \} \\
  &= & \Pr\,\Delta\Se
\]
using again \eqref{eq.cond14} in the last step.
For every choice of $\Se_1$ and $\Se_2$ 
as above, the triple $(\Pr,\Se_1,\Se_2)$ obviously satisfies
\eqref{eq.cond7}, \eqref{eq.cond10}, and \eqref{eq.cond11}.
To get~\eqref{eq.cond8} as well,
we have to refine our choice: Since the condition 
$\Se_2-\Se_1 =\Delta\Se$ is not affected by a same translation of $\Se_1$ and
$\Se_2$, we could choose $\Se_1$ and $\Se_2$ such that $\Se_i=a_i$ for some $i$,
which always yields 
$\Pr\,(\Se_1+\Se_2 -1) \le q^{}_{11}$ as in the  case of $i=1$: 
\[
 \Pr\,(\Se_1+\Se_2 -1) &\le&  
(\Pr-q^{}_{0+})\vee 0  +q^{}_{+1}\wedge \Pr  -\Pr \\
&=& \max\{ -q^{}_{0+} +  q^{}_{+1}\wedge \Pr\,,\, q^{}_{+1}\wedge \Pr  -\Pr  \}\\
&\le & \max\{ q^{}_{+1} - q^{}_{0+} \,,\,0\} \\
 &\le & q^{}_{11}
\]
Alternatively we could choose $\Se_1$ and $\Se_2$ such that  $\Se_i=b_i$ for some
$i$, yielding $\Pr\,(\Se_1+\Se_2 -1) \ge -q^{}_{00}$. By connectedness, then,
we can choose $\Se_1$ and $\Se_2$ such that \eqref{eq.cond8} holds.
Then $(\Pr,\Se_1,\Se_2)\in D$ and we get $(\Pr,\Delta\Se)=(\Pr,\Se_2-\Se_1)\in C$.
\end{proof}
\begin{proof}[Proof of Lemma~\ref{Lemma.2.4}] 
Call $D'$ the set claimed to equal $D$.
Lemma~\ref{Lem.d=2.d=1}  yields 
$ D =\{(\pi_1,\chi^{}_{1|1}) : (\pi,\chi) \in  \mu_1^{-1}(q_{\cdot+}) \}$,
which by Lemma~\ref{Lem.d=1} equals
\[
\quad
\{(\Pr,\Se_1)\in[0,1]^2 : \exists \,\Sp_1\in[0,1]\text{ with }
   (1-\mathrm{Pr})(1-\mathrm{Sp_1}) + \mathrm{Pr} \,\mathrm{Se_1} = q^{}_{1+}\}
\]
Thus, if $(\Pr,\Se_1)\in D$ and if $\Sp_1$ is
chosen according to the above, then using 
$\Sp_1\ge 0$ and $\Sp_1\le 1$ yields \eqref{eq.cond10}
and hence $ (\Pr,\Se_1)\in D'$. 
Conversely, if $ (\Pr,\Se_1)\in D'$,
then $\Sp_1 := 1-(q^{}_{1+} -\Pr\,\Se_1)/(1-\Pr) \in[0,1]$,
even if $\Pr=1$ using $0/0:=0$, hence  $ (\Pr,\Se_1)\in D$.
\end{proof}
\begin{proof}[Proof of Lemma~\ref{Lemma.2.5}] 
As above for   Lemma~\ref{Lemma.2.4}. 
\end{proof}
\begin{proof}[Proof of Lemma~\ref{Lemma.2.8}]
In each case, the ``$\subseteq$'' claim is trivially true. To prove 
``$\supseteq$'', use Lemmas~\ref{Lemma.3.4.new}, \ref{Lemma.2.4}, \ref{Lemma.2.5}
with $\Pr=0$ for $F,G,H$, and Lemma~\ref{Lemma.2.4} with $\Se_1=0$ for I.
\end{proof} 
\begin{proof}[Proof of Lemma~\ref{Lemma.1.4}]  
Call $B_\le'$ the set claimed to equal $B_\le$.
By Lem\-ma~\ref{Basic.lemma}, we have $(\Pr,\Se_1,\Se_2)\in B_\le$
iff there exist $\Sp_1,\Sp_2 $ with 
the quintuple 
$(\Pr,\Sp_1,\Se_1,\Sp_2,\Se_2)\in[0,1]^5$
satisfying 
\eqref{eq.cond1}-\eqref{eq.cond4},
or equivalently \eqref{eq.cond5}-\eqref{eq.cond8}, and additionally 
\la      \label{eq.cond.Sp1.le.Sp2}
     \Sp_1 &\le & \Sp_2
\al

Let  $(\Pr,\Se_1,\Se_2)\in B_\le$. Then Lemma~\ref{Lemma.1.3} and 
$B_\le\subseteq B$ yield \eqref{eq.cond7}-\eqref{eq.cond11}, and  
using  \eqref{eq.cond5} and \eqref{eq.cond.Sp1.le.Sp2},
we can  sharpen \eqref{eq.cond7} to  \eqref{eq.cond12}, 
so that  $(\Pr,\Se_1,\Se_2)\in B_\le'$.
 
Conversely, let  $(\Pr,\Se_1,\Se_2)\in B_\le'$. 
Then, since~\eqref{eq.cond12} implies~\eqref{eq.cond7}, 
Lem\-ma~\ref{Lemma.1.3} yields  $(\Pr,\Se_1,\Se_2)\in B$,
so that  there exist $\Sp_1,\Sp_2\in[0,1]$
such that  $(\Pr,\Sp_1,\Se_1,\Sp_2,\Se_2)\in A$.
If $\Pr=1$, then  by Lemma~\ref{Basic.lemma} we can choose
e.g.~$\Sp_1=\Sp_2 =\frac 12$, 
since \eqref{eq.cond5}-\eqref{eq.cond8} remain unaffected, 
and hence get~\eqref{eq.cond.Sp1.le.Sp2}.
If $\Pr <1$, then  \eqref{eq.cond5} and
the left hand inequality in \eqref{eq.cond12}
yield~\eqref{eq.cond.Sp1.le.Sp2}. Thus  $(\Pr,\Se_1,\Se_2)\in B_\le$.
\end{proof}
\begin{proof}[Proof of Lemma \ref{Lemma.1.5}]    
Call $C_\le'$ the set claimed to equal $C_\le$.
By Lem\-ma~\ref{Lemma.1.4}, we have  $(\Pr,\Delta\Se)\in C_\le$ iff
there exist $\Se_1,\Se_2$ with $(\Pr,\Se_1,\Se_2) \in[0,1]^3$
satisfying  \eqref{eq.cond8}, \eqref{eq.cond10}, \eqref{eq.cond11}, 
\eqref{eq.cond12}, and $\Delta\Se=\Se_2-\Se_1$. 

Let  $(\Pr,\Delta\Se)\in C_\le$ and let $\Se_1,\Se_2$ be as just stated.
Then \eqref{eq.cond10}   and \eqref{eq.cond11} yield~\eqref{Upper.bound.PrDeltaSe},
and together with \eqref{eq.cond12} this yields   \eqref{eq.cond13},
hence  $(\Pr,\Delta\Se)\in C_\le'$.

Conversely, let $(\Pr,\Delta\Se)\in C_\le'$.
Then, since~\eqref{eq.cond13} implies~\eqref{eq.cond14},
Lem\-ma~\ref{Lemma.3.4.new} yields $(\Pr,\Delta\Se)\in C$, so that there
exist $\Se_1,\Se_2$ with $(\Pr,\Se_1,\Se_2) \in B$ and 
$\Delta\Se=\Se_2-\Se_1$. Now the left hand inequality 
in~\eqref{eq.cond13} yields $q_{01}^{}-q^{}_{10} \le \Pr\,(\Se_2-\Se_1)$, 
which together with~\eqref{eq.cond7} yields~\eqref{eq.cond12}.
Hence, by Lemma~~\ref{Lemma.1.4}, we have 
$(\Pr,\Se_1,\Se_2)\in B_\le$ and thus $(\Pr,\Delta\Se)=(\Pr,\Se_2-\Se_1)\in C_\le$.
\end{proof}

\begin{proof}[Proof of Lemma~\ref{Lemma.2.10}] 
Call $D_{\le}'$ the set claimed to equal $D_{\le}$. By Lem\-ma~\ref{Lemma.1.4},
we have $(\Pr,\Se_1)\in D_{\le}$ iff there exists $\Se_2$ 
with $(\Pr,\Se_1,\Se_2)\in[0,1]^3$
satisfying~\eqref{eq.cond8}, \eqref{eq.cond10}, \eqref{eq.cond11},
and~\eqref{eq.cond12}. 

Let  $(\Pr,\Se_1)\in D_{\le}$ and let $\Se_2$ 
be as just stated.
Then~\eqref{eq.cond8} and~\eqref{eq.cond12} yield
\[
\Pr\,\Se_1 
  &=& \frac 12 \big(\Pr + \Pr\,(\Se_1+\Se_2-1) - \Pr \,(\Se_2 -\Se_1)\big) \\
  &\le& \frac 12 \big( \Pr + q^{}_{11} - (q^{}_{01} -q^{}_{10})\big)  
   \,\,\,=\,\,\, \frac{\Pr + q^{}_{1+} - q^{}_{01}  }{2}
\]
and~\eqref{eq.cond12} and $\Se_2\le1$ yield
$\Pr\,\Se_1 = \Pr\,\Se_2  -\Pr\,(\Se_2-\Se_1) 
 \le  
\Pr + q^{}_{10}-q^{}_{01}$.
Combined with~\eqref{eq.cond10}, we get~\eqref{eq.cond14A}.
Hence $(\Pr,\Se_1)\in D_{\le}'$.

Conversely, let $(\Pr,\Se_1)\in D_{\le}'$. 
Then, since~\eqref{eq.cond14A} implies~\eqref{eq.cond10},
Lemma~\ref{Lemma.2.4} yields $(\Pr, \Se_1)\in D$, 
so that there exists $\Se_2$ with  $(\Pr,\Se_1,\Se_2)\in B$.
By Lem\-ma~\ref{Lemma.1.3}, this is equivalent  
to $\Se_2$ fulfilling the conditions  $\Se_2 \in [0,1]$ and 
\eqref{eq.cond7}-\eqref{eq.cond11}, 
and we may assume that $\Se_2$ has been chosen maximal with this property.
Then at least one of the following four cases occurs, with each leading 
via~\eqref{eq.cond14A} or trivially
to $\Pr\,(\Se_2-\Se_1) \ge   q^{}_{01}-q^{}_{10}$ and hence,
using~\eqref{eq.cond7}, 
to~\eqref{eq.cond12}, proving  $(\Pr,\Se_1)\in D_{\le}$ as desired:

Case 1: $\Se_2 =1$. Then 
$\Pr\,(\Se_2-\Se_1) = \Pr -\Pr\,\Se_1  
 \ge  \Pr - ( \Pr + q^{}_{10}-q^{}_{01} )  
=   q^{}_{01}-q^{}_{10}$.
Case 2: Equality holds on the right in~\eqref{eq.cond7}.
Case 3: Equality holds on the right in~\eqref{eq.cond8}. Then 
\[
 \Pr\,(\Se_2-\Se_1)& =& \Pr +\Pr\,(\Se_1 +\Se_2-1) -2\,\Pr\,\Se_1  \\
  &\ge&  \Pr + q^{}_{11} -( \Pr + q^{}_{1+} - q^{}_{01}) 
  \,\,\,=\,\,\,  q^{}_{01}-q^{}_{10}
\]
Case 4: Equality holds on the right in~\eqref{eq.cond11}. Then 
$ \Pr\,(\Se_2-\Se_1) = \Pr\,\Se_2 - \Pr\,\Se_1
\ge   q^{}_{+1} - q^{}_{1+} =  q^{}_{01}-q^{}_{10} $.
\end{proof}
\begin{proof}[Proof of Lemma~\ref{Lemma.2.12}] 
Call $E_\le'$ the set  claimed to equal   $E_\le$. By Lem\-ma~\ref{Lemma.1.4},
we have $(\Pr,\Se_2)\in E_{\le}$ iff there exists $\Se_1$ 
with $(\Pr,\Se_1,\Se_2)\in[0,1]^3$
satisfying~\eqref{eq.cond8}, \eqref{eq.cond10}, \eqref{eq.cond11},
and~\eqref{eq.cond12}. 

Let  $(\Pr,\Se_2)\in E_{\le}$ and let $\Se_1$  be as just stated.
Then~\eqref{eq.cond8} and~\eqref{eq.cond12} yield
\[
\Pr\,\Se_2 
 &=&\frac 12 \big(\Pr + \Pr\,(\Se_1+\Se_2-1) + \Pr \,(\Se_2-\Se_1)\big) \\
 &\ge& \frac 12 \big( \Pr - q^{}_{00} + q^{}_{01} -q^{}_{10}\big)  
   \,\,\,=\,\,\, \frac{\Pr + q^{}_{01} - q^{}_{+0}}{2}
\]
and~\eqref{eq.cond12} and $\Pr\,\Se_1\ge0$ yield
$\Pr\,\Se_2 = \Pr\,(\Se_2-\Se_1) +  \Pr\,\Se_1 \ge  q^{}_{01}-q^{}_{10}$.
Combined with~\eqref{eq.cond11}, we get~\eqref{eq.cond14B}.
Hence $(\Pr,\Se_1)\in D_{\le}'$.

Conversely, let $(\Pr,\Se_2)\in E_{\le}'$. 
Then, since~\eqref{eq.cond14B} implies~\eqref{eq.cond11},
Lemma~\ref{Lemma.2.5} yields $(\Pr, \Se_2)\in E$, 
so that there exists $\Se_1$ with  $(\Pr,\Se_1,\Se_2)\in B$.
By Lem\-ma~\ref{Lemma.1.3}, this is equivalent  
to $\Se_1$ fulfilling the conditions  $\Se_1 \in [0,1]$ and 
\eqref{eq.cond7}-\eqref{eq.cond11}, 
and we may assume that $\Se_1$ has been chosen minimal with this property.
Then at least one of the following four cases occurs, with each leading 
via~\eqref{eq.cond14B} or trivially
to $\Pr\,(\Se_2-\Se_1) \ge   q^{}_{01}-q^{}_{10}$ and hence,
using~\eqref{eq.cond7}, 
to~\eqref{eq.cond12}, proving  
$(\Pr,\Se_2)\in E_{\le}$ as desired:

Case 1: $\Se_1 =0$. Then 
$\Pr\,(\Se_2-\Se_1) = \Pr\,\Se_2  \ge  q^{}_{01}-q^{}_{10}$.
Case 2: Equality holds on the right in~\eqref{eq.cond7}.
Case 3: Equality holds on the left in~\eqref{eq.cond8}. Then 
\[
 \Pr\,(\Se_2-\Se_1)& =& -\Pr -\Pr\,(\Se_1 +\Se_2-1) +2\,\Pr\,\Se_2  \\
  &\ge&  -\Pr + q^{}_{00} +( \Pr + q^{}_{01} - q^{}_{+0}) 
  \,\,\,=\,\,\,  q^{}_{01}-q^{}_{10}
\]
Case 4: Equality holds on the left in~\eqref{eq.cond10}. Then 
$ \Pr\,(\Se_2-\Se_1) = \Pr\,\Se_2 - \Pr\,\Se_1
 \ge  (\Pr-  q^{}_{+0})  - ( \Pr - q^{}_{0+})
=  q^{}_{01}-q^{}_{10} $.
\end{proof}

\begin{proof}[Proof of Lemma~\ref{Lemma.2.13}]
Call $F_\le'$ the interval  claimed to equal   $F_\le$.  
By Lemma~\ref{Lemma.1.5}, we  have $\Delta\Se \in F_{\le}$ iff there exists $\Pr $
with $(\Pr,\Delta\Se) \in [0,1]\times[-1,1]$ satisfying~\eqref{eq.cond13}. 

Let $\Delta\Se  \in F_{\le}$ and let $\Pr$ be as just stated. 
If $q^{}_{01}-q^{}_{10} >0$, then~\eqref{eq.cond13} yields $\Pr>0$ and hence
$\Delta\Se \ge (q^{}_{01}-q^{}_{10})/\Pr \ge q^{}_{01}-q^{}_{10}$. 
Hence  always $\Delta\Se\in F_\le'$.

Conversely, let  $\Delta\Se \in F_{\le}'$. If $q^{}_{01}-q^{}_{10} >0$, 
then $\Pr :=  (q^{}_{01}-q^{}_{10})/\Delta\Se \in{]0,1]}$
satisfies~\eqref{eq.cond13}.
If  $q^{}_{01}-q^{}_{10} \le 0$, then $\Pr:=0$ satisfies~\eqref{eq.cond13}.
Hence $\Delta\Se\in F_\le$.
\end{proof} 

\begin{proof}[Proof of Lemma~\ref{Lemma.2.14}] 
Call $G_\le'$ the interval  claimed to equal   $G_\le$. 
By Lemma~\ref{Lemma.2.10}, we  have $\Se_1 \in G_\le$ iff there exists $\Pr$
with $(\Pr,\Se_1) \in [0,1]^2$ satisfying~\eqref{eq.cond14A}. 

If $q^{}_{01}- q^{}_{10} \le 0$, then   $q^{}_{+1}-q^{}_{10}  \le  q^{}_{11}$
and hence $ G_\le' =[0,1]$; and given $\Se_1\in [0,1]$, we may put  $\Pr:=0$ 
to satisfy~\eqref{eq.cond14A}, so that also   $G_\le =[0,1]$.

So let $q^{}_{01}- q^{}_{10} >0$ for the rest of this proof. The three
functions $f_i:{]0,1]}\rightarrow \R$ defined by
\[
\qquad f_1(x) \,:=\, \frac{q^{}_{1+}}{x} \qquad 
 f_2(x) \,:=\, \frac 12 +\frac{q^{}_{1+}-q^{}_{01}}{2\,x}       \qquad 
 f_3(x) \,:=\, 1- \frac{q^{}_{01}-q^{}_{10}}{x}
\]
are continuous and monotone with  $\lim_{x\rightarrow 0}f_3(x)=-\infty$, 
so that their pointwise infimum $f:=f_1\wedge f_2 \wedge f_3$ attains 
its maximal value  at $\Pr_1:=1$ or at
some  $x \in{]0,1]}$ satisfying $f_i(x) = f_j(x)$ 
with $i< j$. The latter three equations have the unique solutions
$\Pr_{12} := q^{}_{1+} + q^{}_{01}$, $\Pr_{13} := q^{}_{+1}$,  
$\Pr_{23} := q^{}_{+1}  -  q^{}_{10}$, 
each  strictly positive by $q^{}_{01}- q^{}_{10} >0$,
and we get 
\[
 f(\Pr_1) &=& q^{}_{1+}\wedge\frac{1+q^{}_{1+} -q^{}_{01} }{2}
     \wedge \big(1+q^{}_{10} -q^{}_{01}\big)  \,\,\,= \,\,\,q^{}_{1+}  \\
 f(\Pr_{12}) &=& \frac{q^{}_{1+}}{q^{}_{1+}+q^{}_{01}}
     \wedge \left(1-\frac{q^{}_{01}-q^{}_{10}}{q^{}_{1+} + q^{}_{01}}\right)  
    \,\,\,= \,\,\, \frac{q^{}_{1+}}{q^{}_{1+}+q^{}_{01}}      \\
 f(\Pr_{13}) &=& \frac{q^{}_{1+}}{q^{}_{+1}} 
 \wedge  \left(\frac12 + \frac{q^{}_{1+}-q^{}_{01}}{2\,q^{}_{+1} } \right) 
   \,\,\,= \,\,\,\frac{q^{}_{1+} +q^{}_{11} }{ 2\,q^{}_{+1}}    \\
 f(\Pr_{23}) &=&  \frac{q^{}_{1+}}{ q^{}_{+1} - q^{}_{10}} 
  \wedge\left(1- \frac{q^{}_{01}-q^{}_{10}}{ q^{}_{+1}-q^{}_{10}}\right) 
 \,\,\,= \,\,\, \frac{q^{}_{11}}{ q^{}_{+1} - q^{}_{10}} 
\]
We have $ f(\Pr_1) \le  f(\Pr_{12})$ since $ q^{}_{1+} + q^{}_{01}\le 1$.
Writing here  $a\sim b$ to indicate that $ab>0$ or $a=b=0$ holds, 
clearing fractions yields
\[
 f(\Pr_{12})- f(\Pr_{13})
 &\sim& 2\,q^{}_{+1} q^{}_{1+}-(q^{}_{1+}+q^{}_{01}) ( q^{}_{1+} +q^{}_{11})  
 \,\,\,= \,\,\, q^{}_{01}\,( q^{}_{01}- q^{}_{1+}  )  \\
   f(\Pr_{23})- f(\Pr_{13})
&\sim& 2\,q^{}_{+1}q^{}_{11} -(q^{}_{+1} - q^{}_{10}) (q^{}_{1+} +q^{}_{11}) 
 \,\,\,= \,\,\, q^{}_{01}\,( q^{}_{1+} - q^{}_{01} ) 
\]
Hence 
$ \max f =f(\Pr_{12}) \vee  f(\Pr_{23}) $, 
namely attained  at  $\Pr_{12}$ if  $q^{}_{01}\ge q^{}_{1+}$
and at $ \Pr_{23}$ if  $q^{}_{01}\le q^{}_{1+}$.

Now let  $\Se_1\in G_\le$ and let  $\Pr\in [0,1]$ with~\eqref{eq.cond14A}.  
Then in particular $\Pr\,\Se_1 \le \Pr + q^{}_{10} -q^{}_{01}$ 
and thus  $\Pr>0 $, using  $q^{}_{01}- q^{}_{10} >0$.
Thus~\eqref{eq.cond14A} yields 
$\Se_1 \le f(\Pr) \le \max f$ and thus   $\Se_1\in G_\le'$.

Conversely, let $\Se_1\in G_\le'$.  Then  $\Se_1 \le \max f$ and thus,
by  $\lim_{x\rightarrow 0}f(x)=-\infty$ and continuity of $f$,
there is a $\Pr \in{]0,1]}$ with $f(\Pr)=\Se_1$, yielding 
$\Pr\,\Se_1= \Pr\,f(\Pr)=\mathrm{R.H.S.\eqref{eq.cond14A}}
\ge\mathrm{L.H.S.\eqref{eq.cond14A}}$, with the last inequality
due to $\Pr \le 1$ and $q\in\prob(\{0,1\}^2)$, 
so that~\eqref{eq.cond14A} holds and hence $\Se_1\in G_\le$.
\end{proof}

\begin{proof}[Proof of Lemma~\ref{Lemma.2.15}]
Very similar to the above proof of Lem\-ma~\ref{Lemma.2.14}, 
with the following differences:
Use  Lemma~\ref{Lemma.2.12} in place of Lemma~\ref{Lemma.2.10}.
After again restricting attention to the main case where
$q^{}_{01}- q^{}_{10} >0$, define now
\[
\qquad f_1(x) \,:=\, 1- \frac{q^{}_{+0}}{x} \qquad 
 f_2(x) \,:=\, \frac 12 +\frac{q^{}_{01}-q^{}_{+0}}{2\,x}  \qquad 
 f_3(x) \,:=\, \frac{q^{}_{01}-q^{}_{10}}{x}
\]
and observe that $f:=f_1\vee f_2 \vee f_3$ is minimized over $]0,1]$ 
at one of  $\Pr_1:=1$, $\Pr_{12} := q^{}_{+0} + q^{}_{01}$, 
$\Pr_{13} := q^{}_{0+}$,  
$\Pr_{23} := q^{}_{0+}  -  q^{}_{10}$.
After computing  $f(\Pr_1)= q^{}_{+1}$,  
$ f(\Pr_{12})=\frac{q^{}_{01}}{q^{}_{+0} + q^{}_{01}   }$,
$ f(\Pr_{13})=\frac{2\, q^{}_{01} -q^{}_{10} }{2\, q^{}_{0+} }$,
$ f(\Pr_{23}) =  \frac{q^{}_{01}-q^{}_{10}}{q^{}_{0+}  -  q^{}_{10} }$,
one observes $f(\Pr_1)\ge f(\Pr_{12})$ and 
\[
 f(\Pr_{12})- f(\Pr_{13})  &\sim&  q^{}_{10}\,( q^{}_{+0} - q^{}_{01} )
\,\,\,\sim\,\,\,  f(\Pr_{13})- f(\Pr_{23})
\]
Hence $  \min f = f(\Pr_{12}) \wedge  f(\Pr_{23})$.
\end{proof} 

\begin{proof}[Proof of Lemma~\ref{Lemma.2.16}] 
Call $I_{\le}'$ the interval  claimed to equal   $I_{\le}$. 
By Lemma~\ref{Lemma.1.5}, we  have $\Pr \in I_{\le}$ iff there exists $\Delta\Se $
with $(\Pr,\Delta\Se) \in [0,1]\times[-1,1]$ satisfying~\eqref{eq.cond13}. 

Let  $\Pr \in I_{\le}$ and let  $\Delta\Se$ be as just stated.
If  $q^{}_{01}- q^{}_{10}=0$, then $I_\le'=[0,1]$ and hence $\Pr\in I_\le'$.
If  $q^{}_{01}- q^{}_{10} >0$, then~\eqref{eq.cond13} implies $\Delta\Se>0$
and hence $\Pr \ge  (q^{}_{01}- q^{}_{10})/\Delta\Se \ge q^{}_{01}- q^{}_{10}$
and thus again $\Pr\in I_\le'$.
If  $q^{}_{01}- q^{}_{10} <0$, then either $q^{}_{01}- q^{}_{10}+1-\Pr \ge0$ 
and then $\Pr\le 1-( q^{}_{10}-q^{}_{01})^+$,
or $q^{}_{01}- q^{}_{10}+1-\Pr <0$  and then~\eqref{eq.cond13} yields
$\Delta\Se <0$ and thus  $\Pr\le(q^{}_{01}- q^{}_{10})/\Delta\Se \le
q^{}_{10}-q^{}_{01}$, so that  $\Pr\in I_\le'$ also in this case.

Conversely, let  $\Pr \in I_{\le}'$. If $q^{}_{01}- q^{}_{10} \le 0$, then either
$q^{}_{01}- q^{}_{10}+1-\Pr \ge0$  and then $\Delta\Se:= 0$
satisfies~\eqref{eq.cond13},
or  $q^{}_{01}- q^{}_{10}+1-\Pr <0$ and then $\Pr \in I_{\le}'$ yields 
$\Pr \le q^{}_{10}-q^{}_{01}$ so that  $\Delta\Se:= -1$
satisfies~\eqref{eq.cond13}.
If  $q^{}_{01}- q^{}_{10} > 0$, then  $\Delta\Se:=(q^{}_{01}- q^{}_{10})/\Pr
\in{]0,1]}$ satisfies~\eqref{eq.cond13}. Hence $\Pr \in I_{\le}$ in every case. 
\end{proof}

\section{The remaining proofs}\label{Sec.remaining.proofs}
Below we prove  Theorems~\ref{Thm.main.result} and 
Theorem~\ref{Thm.bound.Se1.new} by applying the rather general and trivial
Lem\-mas~\ref{Lem.comp.cb} and~\ref{Lem:NEF.Conf.bounds.equiv.equal}
together with the special Lemmas~\ref{Basic.lemma}, \ref{Lemma.1.5}, 
and~\ref{Lemma.2.12}. We then deduce Theorem~\ref{Thm.bound.Se1.new} from 
Theorem~\ref{Thm.bound.Se1}.

\begin{lem}          \label{Lem.comp.cb}
Let $\cP=(P_\theta : \theta\in\Theta)$ and 
$\cQ=(Q_\eta: \eta\in\Eta)$ be 
experiments on the same 
sample space $\cX$, with parameters of interest
$\kappa:\Theta\rightarrow\overline{\R}$
and $\lambda:\Eta\rightarrow\overline{\R}$.  Let $\beta\in[0,1]$.

{\sc A.} Assume the implication
\la          \label{Lem.comp.cb.eq1}
  \eta\in\Eta&\Rightarrow& \exists\theta\in\Theta
  \text{ \rm with }P_\theta= Q_\eta \text{ \rm and }\kappa(\theta) \le \lambda(\eta) 
\al
Then every lower $\beta$-confidence bound for $(\cP,\kappa)$ is also one
for $(\cQ,\lambda)$. 

{\sc B.}  Assume the implication
\la           \label{Lem.comp.cb.eq2}
  \eta\in\Eta&\Rightarrow& \exists\theta\in\Theta
  \text{ \rm with }P_\theta= Q_\eta \text{ \rm and }
\kappa(\theta) \ge \lambda(\eta) 
\al
and let $\underline{\kappa}$ and $\underset{\widetilde{}}{\kappa}$ be 
both lower $\beta$-confidence bounds for $(\cP,\kappa)$ and for
$(\cQ,\lambda)$, with $\underset{\widetilde{}}{\kappa}$ worse than 
$\underline{\kappa}$ for $(\cP,\kappa)$. 
Then $\underset{\widetilde{}}{\kappa}$ is also worse than 
$\underline{\kappa}$ for $(\cQ,\lambda)$.

Analogously for upper confidence bounds, with  ``$\le$'' 
and ``$\ge$'' interchanged.
\end{lem}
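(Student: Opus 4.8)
The plan is to use the hypothesis $P_\theta=Q_\eta$ to transport every probability statement about the experiment $\cQ$ back to $\cP$, where the assumed properties of $\underline{\kappa}$ and $\underset{\widetilde{}}{\kappa}$ are available, and to use the inequality between $\kappa(\theta)$ and $\lambda(\eta)$ only to compare the relevant events or thresholds. Both parts then reduce to unwinding the definitions of Subsection~\ref{Subsec.Conf}, so I expect no genuine obstacle; the only point requiring care is to keep track of which distributional identity and which orientation of the $\kappa$--$\lambda$ comparison is invoked, and in particular that Parts~A and~B call for \emph{opposite} orderings.

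For Part~A, I would fix a lower $\beta$-confidence bound $\underline{\kappa}$ for $(\cP,\kappa)$ and an arbitrary $\eta\in\Eta$, and check the defining inequality $Q_\eta(\underline{\kappa}\le\lambda(\eta))\ge\beta$. Picking $\theta$ as in~\eqref{Lem.comp.cb.eq1}, so that $P_\theta=Q_\eta$ and $\kappa(\theta)\le\lambda(\eta)$, I would write
\[
 Q_\eta(\underline{\kappa}\le\lambda(\eta))
 &=& P_\theta(\underline{\kappa}\le\lambda(\eta))
  \,\,\ge\,\, P_\theta(\underline{\kappa}\le\kappa(\theta))
  \,\,\ge\,\, \beta
\]
where the equality is $P_\theta=Q_\eta$, the first inequality comes from the inclusion $\{\underline{\kappa}\le\kappa(\theta)\}\subseteq\{\underline{\kappa}\le\lambda(\eta)\}$ granted by $\kappa(\theta)\le\lambda(\eta)$, and the last inequality is the confidence-bound property of $\underline{\kappa}$ for $(\cP,\kappa)$.

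For Part~B, I would fix $\eta\in\Eta$ and $t<\lambda(\eta)$ and aim at $Q_\eta(\underset{\widetilde{}}{\kappa}\ge t)\le Q_\eta(\underline{\kappa}\ge t)$, which is precisely~\eqref{comp.conf.b.usual} for $(\cQ,\lambda)$. Here I would choose $\theta$ as in~\eqref{Lem.comp.cb.eq2}, so that $P_\theta=Q_\eta$ and $\kappa(\theta)\ge\lambda(\eta)>t$; the orientation of the inequality in~\eqref{Lem.comp.cb.eq2} is exactly what yields $t<\kappa(\theta)$, which is the range in which the hypothesis ``$\underset{\widetilde{}}{\kappa}$ worse than $\underline{\kappa}$ for $(\cP,\kappa)$'' is assumed. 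Then
\[
 Q_\eta(\underset{\widetilde{}}{\kappa}\ge t)
 &=& P_\theta(\underset{\widetilde{}}{\kappa}\ge t)
  \,\,\le\,\, P_\theta(\underline{\kappa}\ge t)
  \,\,=\,\, Q_\eta(\underline{\kappa}\ge t)
\]
the two equalities coming from $P_\theta=Q_\eta$ and the middle inequality from~\eqref{comp.conf.b.usual} applied at this $\theta$ and $t$. The closest thing to a subtlety is the observation already flagged: a lower-bound guarantee survives passage to a \emph{larger} target value $\lambda(\eta)\ge\kappa(\theta)$ (Part~A), whereas the accuracy comparison survives passage to a \emph{smaller} target value $\lambda(\eta)\le\kappa(\theta)$ (Part~B).
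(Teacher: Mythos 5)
Your proposal is correct and coincides with the paper's own proof: Part~A is the same three-step chain $Q_\eta(\underline{\kappa}\le\lambda(\eta))=P_\theta(\underline{\kappa}\le\lambda(\eta))\ge P_\theta(\underline{\kappa}\le\kappa(\theta))\ge\beta$, and Part~B is the same transport of the comparison~\eqref{comp.conf.b.usual} via $P_\theta=Q_\eta$ after noting $t<\lambda(\eta)\le\kappa(\theta)$. Nothing to add.
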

\begin{proof} A. Let $\underline{\kappa}$ be a lower 
$\beta$-confidence bound for $(\cP,\kappa)$ and let $\eta \in\Eta$.
With $\theta$ from~\eqref{Lem.comp.cb.eq1} then 
$Q_\eta(\underline{\kappa} \le\lambda(\eta) )
 \,=\, P_\theta(\underline{\kappa} \le\lambda(\eta) )
\,\ge\,  P_\theta(\underline{\kappa} \le\kappa(\theta) ) \,\ge\, \beta $.

B. Let $\eta \in\Eta$ and $t<\lambda(\eta)$. 
With $\theta$ from~\eqref{Lem.comp.cb.eq2} we then have $t<\kappa(\theta)$ and hence
$Q_\eta(    \underset{\widetilde{}}{\kappa}       \ge t )
\,=\, P_\theta( \underset{\widetilde{}}{\kappa}         \ge t )
\,\le\,  P_\theta( \underline{\kappa} \ge t )
\,=\, Q_\eta ( \underline{\kappa} \ge t )$.
\end{proof}

Below, a \defn{natural exponential family}, or \defn{NEF} for short,
is any statistical model $\cQ=(Q_\eta: \eta\in\Eta)$ such that, for some 
$k\in \N$ and some measure $\nu$ on $\R^k$, we have $\Eta\subseteq\R^k$ 
and, for each $\eta\in\Eta$, $Q_\eta$ is a law on $\R^k$ with a $\nu$-density
proportional to  
$y\mapsto \exp(\sum_{i=1}^k \eta_iy_i)$.

\begin{lem}  \label{Lem:NEF.Conf.bounds.equiv.equal}
Let $\cQ =(Q_\eta: \eta\in\Eta)$ be a NEF with $\Eta$ open and nonempty.
Let $\lambda:\Eta\rightarrow\overline{\R}$ be lower semicontinuous and let 
$\underline{\lambda}$ and $\underset{\widetilde{}}{\lambda}$ be  equivalent 
lower confidence bounds for $(\cQ,\lambda)$.
Then $ \underline{\lambda}\wedge \sup \lambda(\Eta)
=\underset{\widetilde{}}{\lambda}\wedge\sup\lambda(\Eta)$ $\cQ$-a.s.
\end{lem}
\begin{proof} The equivalence assumption yields 
\la       \label{lambda.bar.tilde.equiv}
 Q_\eta \big( \underline{\lambda} > t\big) 
 &=&  Q_\eta \big(\underset{\widetilde{}}{\lambda}> t\big) 
 \qquad \text{ if $\eta\in\Eta$ and $t\in{[-\infty, \lambda(\eta)[}$}
\al
For fixed $t\in\R$ with $t <\sup\lambda(\Eta)$, the subfamily
$(Q_\eta : \eta \in \Eta, \lambda(\eta) >t)$ is again a NEF with nonempty open
parameter space, hence complete in the sense of Lehmann-Scheff\'e, so that 
\eqref{lambda.bar.tilde.equiv} yields 
$ \{ \underline{\lambda} > t\} =  \{\underset{\widetilde{}}{\lambda}> t\}$ 
$\cQ$-a.s. Hence 
\[
 \{  \underline{\lambda}\wedge\sup\lambda(\Eta) \neq 
\underset{\widetilde{}}{\lambda}\wedge\sup\lambda(\Eta)\}
 &=& \bigcup_{t\in\Q,\, t <\sup\lambda(\Eta)} 
 \{\underset{\widetilde{}}{\lambda} \le t<\underline{\lambda}\}
 \cup\{\underline{\lambda}\le t<\underset{\widetilde{}}{\lambda}\}
\]
is a $\cQ$-null set.
\end{proof}

\begin{proof}[Proof of Theorem~\ref{Thm.main.result}]  
We first check the applicability of Lem\-ma~\ref{Lem.comp.cb} to
some  pairs of estimation problems.
Recall $\mu$ and $P_\theta$ from~\eqref{eq.def.mu}
and~\eqref{Eq:Def.P.theta}.

The problems $(\cM, \eqref{Param.interest.multinomial})$ and 
$(\cP_{2}, \eqref{Param.interest.full})$, in this order but also in the
reversed one, fulfill the assumptions of Lemma~\ref{Lem.comp.cb} A and B:
For the stated order, given  $\theta=(\pi,\chi)\in\Theta_2$, 
put $q:=\mu(\theta)\in\prob(\{0,1\}^2)$, and observe that then
$ \mathrm{M}_{n,q}= P_\theta$ and 
$\text{R.H.S.\eqref{Param.interest.multinomial}} = 
\text{R.H.S.\eqref{Param.interest.full}}$
by Lemma~\ref{Basic.lemma}\eqref{eq.cond5}. 
For the reversed order, given  $q\in\prob(\{0,1\}^2)$, choose
$\theta \in\Theta_2$ with $\mu(\theta)=q$ using the nonemptyness claim 
of Lemma~\ref{Basic.lemma}, and finish as in the preceding sentence.

The problems $(\cP_{2}, \eqref{Param.interest.full})$ 
and $(\cP_{2,\le},\eqref{Param.interest.restr})$
fulfill the assumptions of Lem\-ma~\ref{Lem.comp.cb}~A, since
for $\theta \in\Theta_{2,\le}$, 
we also have $\theta \in \Theta_{2}$
and $\text{R.H.S.\eqref{Param.interest.full}} \le 
\text{R.H.S.\eqref{Param.interest.restr}}$.

The problems $(\cP_{2,\le}, \eqref{Param.interest.restr})$ and 
$(\cM, \eqref{Param.interest.multinomial})$
fulfill the assumptions of Lem\-ma~\ref{Lem.comp.cb}~A and B:
Given  $q\in\prob(\{0,1\}^2)$, Lemma~\ref{Lemma.1.5}
with $\Pr=1$ yields
a $\theta =(\pi,\chi) \in \Theta_{2,\le}$ with $\mu(\theta)=q$,
so $P_\theta =  \mathrm{M}_{n,q}$, and 
$\text{R.H.S.\eqref{Param.interest.restr}}
=\text{R.H.S.\eqref{Param.interest.multinomial}}$.

Applying now Lem\-ma~\ref{Lem.comp.cb} several times
yields parts A and B of the theorem. 
The subclaim of Part~C referring only to 
$(\cM,\eqref{Param.interest.multinomial})$  and
$(\cP_{2}, \eqref{Param.interest.full})$ 
follows directly from Parts A and B, 
as ``$\underline{\Delta}$ strictly worse than
$\underset{\widetilde{}}{\Delta}$''
is equivalent to ``$\underline{\Delta}$ worse
than $\underset{\widetilde{}}{\Delta}$,
and not  $\underset{\widetilde{}}{\Delta}$ worse than $\underline{\Delta}$''.

Finally, let $\underline{\Delta}$ be admissible as a lower $\beta$-confidence 
bound for $(\cM,\eqref{Param.interest.multinomial})$. By Part~A,
$\underline{\Delta}$ is also a $\beta$-confidence bound for
$(\cP_{2,\le},\eqref{Param.interest.restr})$. Let 
$\underset{\widetilde{}}{\Delta}$ be a better $\beta$-confidence bound for
$(\cP_{2,\le},\eqref{Param.interest.restr})$.
We have to show that  $\underset{\widetilde{}}{\Delta}$
is equivalent to $\underline{\Delta}$ 
for $(\cP_{2,\le},\eqref{Param.interest.restr})$.
By Part~B,  $\underset{\widetilde{}}{\Delta}$ is better 
than $\underline{\Delta}$  also for $(\cM,\eqref{Param.interest.multinomial})$
and hence, by the assumed admissibility, in fact equivalent 
to   $\underline{\Delta}$ for $(\cM,\eqref{Param.interest.multinomial})$.
With a view towards applying Lemma~\ref{Lem:NEF.Conf.bounds.equiv.equal},
we put $\Eta:=\{\eta\in{]-\infty,0[^3} : \sum_{i=1}^3 \mathrm{e}^{\eta_i^{}}
<1\}$, define a function $\tau : \Eta \rightarrow \prob(\{0,1\}^2)$ by 
$\tau_{00}(\eta):=  \mathrm{e}^{\eta_1^{}}$,
$\tau_{01}(\eta):=  \mathrm{e}^{\eta_2^{}}$, 
$\tau_{10}(\eta):=  \mathrm{e}^{\eta_3^{}}$,
and  $\tau_{11}(\eta):= 1- \sum_{i=1}^3 \mathrm{e}^{\eta_i^{}}$ 
for $\eta\in\Eta$, and put 
$\cQ := \cM\circ \tau$, that is, $\cQ=( Q_\eta : \eta \in\Eta)$ 
with $Q_\eta=\mathrm{M}_{n,\tau(\eta)}$.
Let $\kappa$ denote the function~\eqref{Param.interest.multinomial}
and let $\lambda := \kappa\circ \tau$ so that, 
writing $\Prob(\cX)$ for the set of all laws 
on $\cX:=\{k\in\N_0^{\{0,1\}^2}: k_{++}=n\}$, the diagram
\begin{diagram}
             &             &\Eta            &         &  \\
             &\ldTo^\lambda&\dTo_\tau       &\rdTo^\cQ&  \\
\overline{\R}&\lTo_\kappa  &\prob(\{0,1\}^2)&\rTo_\cM 
&\Prob(\cX) 
\end{diagram}
commutes.
Then, trivially, 
$\underline{\lambda} :=\underline{\Delta}$ and 
$ \underset{\widetilde{}}{\lambda}:= \underset{\widetilde{}}{\Delta}$
are equivalent lower confidence bounds for $(\cQ,\lambda)$. 
Now Lemma~\ref{Lem:NEF.Conf.bounds.equiv.equal} applies
and yields $\underline{\Delta}\wedge1=
\underset{\widetilde{}}{\Delta}\wedge1$ everywhere on $\cX$ and hence,
as \eqref{Param.interest.restr} is $[-1,1]$-valued,  
the wanted equivalence.
\end{proof}
\begin{proof}[Proof of Theorem~\ref{Thm.bound.Se1.new}]
The problems $(\cM,\eqref{Param.interest.multinomial.2})$
and $(\cP_{2,\le}, \eqref{Param.interest.restr.2})$ fulfill the assumptions
of the ``upper'' version of Lemma~\ref{Lem.comp.cb}~A, since 
for  $\theta \in\Theta_{2,\le}$ and $q:= \mu(\theta)$, we have
$\text{R.H.S.\eqref{Param.interest.multinomial.2}} \ge
\text{R.H.S.\eqref{Param.interest.restr.2}}$
by Lemma~\ref{Lemma.2.14}.  
The problems $(\cP_{2,\le}, \eqref{Param.interest.restr.2})$ and 
$(\cM,\eqref{Param.interest.multinomial.2})$ fulfill the assumptions
of the ``upper'' version of Lemma~\ref{Lem.comp.cb}~A and~B, since 
for $q\in\prob(\{0,1\}^2)$, Lemma~\ref{Lemma.2.14} yields a 
$\theta\in\Theta_{2,\le}$ with $\mu(\theta)=q$ and 
$\text{R.H.S.\eqref{Param.interest.restr.2}}
=\text{R.H.S.\eqref{Param.interest.multinomial.2}}$. 
Hence Lemma~\ref{Lem.comp.cb} yields parts A and B of the theorem.

To prove Part~C, we can proceed as in the last paragraph of our 
proof of Theorem~\ref{Thm.main.result}, with the following changes:
Given now $\overline{S}$ and $\widetilde{S}$, 
we let $\kappa$ denote the function~\eqref{Param.interest.multinomial.2}.
Then Lemma~\ref{Lem:NEF.Conf.bounds.equiv.equal}
applies with $\underline{\lambda}:=-\overline{S}$ 
and $\underset{\widetilde{}}{\lambda}:=-\widetilde{S}$
to yield $\overline{S}\vee 0 = \widetilde{S}\vee0$.
Here $-\kappa\circ\eta$ is indeed lower semicontinuous, but one could also 
 replace $\Eta$ by $\{\eta\in\Eta:
\eta_2^{}>\eta_3^{}\}$; then  $-\kappa\circ\eta$ would be continuous.
\end{proof}
\begin{proof}[Proof of Theorem~\ref{Thm.bound.Se1}]
The function~\eqref{u.without.k00} is an upper $\beta$-confidence bound 
in the quadrinomial model 
$\cM := (\mathrm{M}_{n,q} : q \in \prob(\{0,1\}^2))$
and for the parameter~\eqref{Param.interest.multinomial.2}
from Theorem~\ref{Thm.bound.Se1.new},
since for $q \in \prob(\{0,1\}^2)$, conditioning on the upper left
corner of our $2\times 2$ table yields 
\[
&& \mathrm{M}_{n,q}\left(\left\{ 
 k \in\N_0^{\{0,1\}^2} : k_{++}=n,\, u(k^{}_{10},k^{}_{01},k^{}_{11}) 
\ge \text{R.H.S.\eqref{Param.interest.multinomial.2}}
  \right\}\right) \\
&=&\sum_{m=0}^n\mathrm{b}_{n,q^{}_{00}}(n-m)
 \,\mathrm{M}_{m,p}\left(\left\{j\in\N_0^{\{1,2,3\}} : j^{}_+=m, \,
 u(j) \ge
\text{R.H.S.\eqref{Eq:Second.trinomial.parameter}} 
\right\}\right) \\
&\ge&\beta
\]
with $\mathrm{b}_{n,q_{00}^{}}$ denoting  a binomial density, 
and with $p\in\prob(\{1,2,3\})$ defined by 
$p := (1-q^{}_{00})^{-1}(q^{}_{10}, q^{}_{01},q^{}_{11})$
if $q^{}_{00} <1$, and $p:=(0,0,1)$ if $q^{}_{00}=1$.
Hence the  claim follows from Theorem~\ref{Thm.bound.Se1.new}~A.
\end{proof}
\section*{Acknowledgements}
We thank our anonymous associate  editor for helpful suggestions,
Paul Taylor for making his commutative diagrams package 
available at {\tt www.PaulTaylor.EU/diagrams}, 
and Todor Dinev for carefully reading  several versions of this paper.

\end{document}